\documentclass[journal]{IEEEtran}
\usepackage{amsmath,amsfonts}
\usepackage{amssymb}
\usepackage[linesnumbered,ruled,vlined]{algorithm2e}
\usepackage{amsthm}
\newtheorem{theorem}{Theorem}
\newtheorem{lemma}{Lemma}
\newtheorem{assumption}{Assumption}
\newtheorem{corollary}{Corollary}

\theoremstyle{remark}
\newtheorem{remark}{Remark}

\usepackage{bm}
\usepackage{array}
\usepackage{graphicx}
\usepackage{subfigure}
\usepackage{textcomp}
\usepackage{stfloats}
\usepackage{url}
\usepackage{verbatim}
\usepackage{pifont}
\usepackage{cite}
\usepackage{tablefootnote}
\usepackage{makecell}
\usepackage{multirow}
\usepackage{booktabs}
\usepackage{float}
\usepackage[acronym]{glossaries}
\newcommand{\newac}{\newacronym}
\newcommand{\ac}{\gls}

\newcommand{\acpl}{\glspl}

\newac{nlos}{NLOS}{Non-Line-of-Sight}
\newac{iot}{IoT}{Internet of Things}
\newac{gnss}{GNSS}{Global Navigation Satellite Systems}
\newac{rf}{RF}{Radio Frequency}
\newac{toa}{TOA}{Time-of-Arrival}
\newac{aoa}{AOA}{Angle-of-Arrival}
\newac{knn}{KNN}{K-Nearest Neighbors}
\newac{los}{LOS}{Line-of-Sight}
\newac{csi}{CSI}{Channel State Information}
\newac{ap}{AP}{Access Point}
\newac{agc}{AGC}{Automatic Gain Control}
\newac{rssi}{RSSI}{Received Signal Strength Indicator}
\newac{tof}{TOF}{Time-of-Flight}
\newac{mmse}{MMSE}{Minimum Mean Square Error}
\newac{vae}{VAE}{Variational Autoencoder}

\begin{document}
\title{Environment-Conditioned Diffusion Meta-Learning for Data-Efficient WiFi Localization}

 \author{Jun Gao, Zheng Xing, Wenliang Lin, Weibing Zhao, Xuhui Zhang,\\ Junting Chen, Zhongliang Deng, Shuguang Cui%
\thanks{J. Gao, Z. Xing, W. Zhao, and X. Zhang are with the Shenzhen Future Network of Intelligence Institute, The Chinese University of Hong Kong, Shenzhen, China (e-mail: gaojun@cuhk.edu.cn; zhengxing@link.cuhk.edu.cn; weibingzhao@msubit.edu.cn; zhangxuhui@cuhk.edu.cn).}%
\thanks{W. Lin and Z. Deng are with the School of Electronic Engineering, Beijing University of Posts and Telecommunications, Beijing, China (e-mail: charterlin@bupt.edu.cn; zld2026@163.com).}%
\thanks{J. Chen is with the Shenzhen Future Network of Intelligence Institute and the School of Science and Engineering, The Chinese University of Hong Kong, Shenzhen, China (e-mail: juntingc@cuhk.edu.cn).}%
\thanks{S. Cui is with the School of Science and Engineering (SSE), the Future Network of Intelligence Institute (FNii), and the Guangdong Provincial Key Laboratory of Future Networks of Intelligence, The Chinese University of Hong Kong; he is also affiliated with Peng Cheng Laboratory, Shenzhen, China (e-mail: shuguangcui@cuhk.edu.cn).}%
}

\maketitle

\begin{abstract}	

Fingerprinting-based WiFi localization suffers from cross-environment distribution shifts, particularly when only a small number of labeled samples are available in the target environment. Existing adaptation methods mainly rely on wireless measurements and rarely exploit the physical propagation geometry that governs CSI fingerprints. This paper proposes EnvCoLoc, an environment-conditioned diffusion meta-learning framework for data-efficient WiFi localization. EnvCoLoc extracts material-aware descriptors from registered 3D point clouds within the first Fresnel zone and uses them to condition a latent stochastic generator. The generator produces environment-specific parameter offsets to modulate a shared meta-learned initialization, enabling the localization network to start adaptation from a geometry-aware parameter region. The generator and localization network are jointly optimized under a two-loop meta-learning objective, so that environmental priors are injected before support-set fine-tuning while residual task-specific mismatch is corrected by gradient adaptation. We further provide a conditional local excess-loss analysis for finite-step adaptation. The bound decomposes the adaptation error into initialization quality, environment-conditioned residual variation, stochastic generation errors, and inner-loop contraction, thereby explaining when geometric conditioning and stochastic offset generation improve the local adaptation bound. Experiments using commodity WiFi devices and Leica RTC360 LT point-cloud acquisition show that EnvCoLoc consistently improves few-shot localization performance. With only 10 target support samples, EnvCoLoc reduces the mean localization error by 26.1\% in the LOS lab scenario and 24.9\% in the NLOS office scenario compared with MetaLoc.
\end{abstract}
\begin{IEEEkeywords}
Wireless localization, diffusion model, meta-learning, fingerprinting.
\end{IEEEkeywords}
\section{Introduction}

The pervasive deployment of spatially-aware \ac{iot} applications has made high-precision indoor localization a fundamental requirement for next-generation wireless networks~\cite{chen2022fidora,xing2024constructing,wang2020csi,10230036,yuan2024risfedbroad}. Since \ac{gnss} suffer from intrinsic signal blockages in indoor environments~\cite{7275552}, \ac{rf}-based positioning has emerged as a highly viable alternative. Although conventional geometric methods relying on \ac{toa} or \ac{aoa} can achieve centimeter-level accuracy~\cite{li2021transloc,feng2020kalman,9737357}, they exhibit severe performance degradation under \ac{nlos} conditions~\cite{10297296} and necessitate strict temporal synchronization or complex signal decoding with wireless devices. This challenge becomes more pronounced as emerging B5G/6G systems increasingly exploit high-directionality and spatially selective wireless links~\cite{lin2024singleboard}. Fingerprinting-based localization effectively overcomes these challenges, offering a cost-effective deployment strategy without the need for specialized ranging hardware or stringent synchronization constraints.

Traditional fingerprinting methods typically operate in two distinct phases: establishing an offline fingerprint database that links spatial coordinates to wireless features, and subsequently matching online real-time measurements against this database to estimate the user's position. Existing fingerprinting methodologies have progressively evolved across three primary paradigms: deterministic, probabilistic, and deep learning-based approaches. Deterministic methods, such as \ac{knn} variants~\cite{bahl2000radar,wu2017gain,he2014sectjunction,jiang2012ariel}, prioritize computational efficiency by directly comparing real-time signals with pre-stored fingerprints using simple similarity metrics. Yet, their reliance on rigid, point-to-point matching renders them highly sensitive to signal noise and incapable of adapting to environmental shifts. To enhance robustness against signal fluctuations, probabilistic and statistical methods~\cite{youssef2005horus,4536547} infer locations by modeling the uncertainty of wireless fingerprints. While this principle explicitly captures signal uncertainty, continually updating these complex statistical models to reflect dynamic environmental changes is notoriously inefficient. Consequently, recent efforts have shifted toward deep learning-based methods~\cite{wang2015deepfi,ghozali2019indoor,chen2017confi,hsieh2019deep}, which leverage advanced neural architectures to automatically extract generalized, robust feature representations from complex wireless measurements. Although these data-driven advancements demonstrate powerful representation capabilities, they still confront two critical bottlenecks: adapting efficiently to new environments and maintaining high localization accuracy when only a sparse set of labeled samples is available.

To address the challenge of environmental adaptation, recent deep learning-based fingerprinting studies have proposed solutions from both algorithmic and data-centric perspectives. On the algorithmic side, domain adaptation and continual learning techniques are widely employed to mitigate distribution shifts between historical and new environments. For instance, TransLoc~\cite{li2021transloc} utilizes transfer learning to map cross-domain features into a shared homogeneous space, while CRISLoc~\cite{CRISLoc} reconstructs outdated \ac{csi} databases using a sparse set of new measurements. Other approaches leverage generative models, such as the variational autoencoder in Fidora~\cite{chen2022fidora}, or employ incremental learning, like ILCL~\cite{zhu2022intelligent}, to update models with reduced training overhead. On the data side, efforts have focused on engineering robust signal representations to combat environmental noise. ViVi~\cite{wu2017gain}, for example, mitigates \ac{rssi} uncertainty by exploiting spatial gradients, whereas CiFi~\cite{cifi} and DFPS~\cite{fang2015novel} enhance \ac{csi} stability by extracting phase and amplitude differences between antenna pairs. However, despite these advancements, existing methods rely almost entirely on mining wireless signals, inherently neglecting the physical environmental geometry that dictates signal variations.
\begin{figure}[!t]
	\centering
	\includegraphics[width=1\columnwidth]{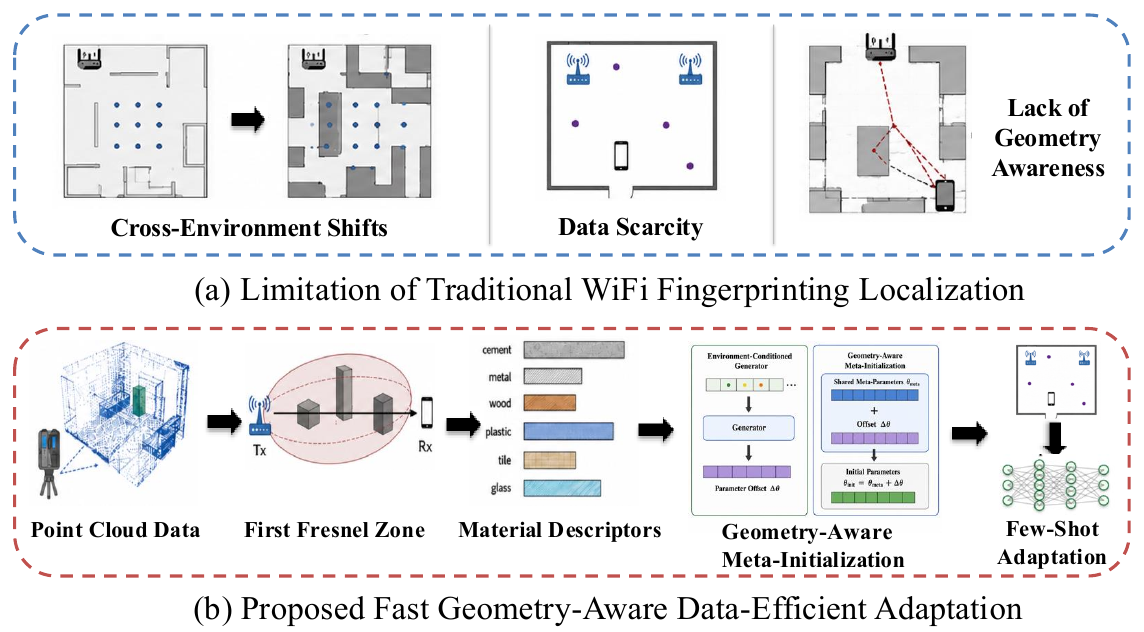}
	\caption{Overview of the limitation in traditional WiFi fingerprinting and the proposed method for fast geometry-aware data-efficient adaptation.}	\label{fig:background}
\end{figure}

The physical distribution of obstacles between a transmitter and a receiver governs multipath propagation, which in turn defines the unique \ac{csi} fingerprint at any given location. Recognizing this, recent methodologies have increasingly incorporated explicit environmental data, such as simulated ray-tracing~\cite{9354041} and point-cloud-based environmental information~\cite{10757755}, to model propagation environments and reduce fingerprint ambiguity. Related progress in reconfigurable intelligent surfaces also indicates that propagation behavior can be strongly shaped by environmental structure~\cite{lu2025risdesign}. Parallel to these environmental integrations, our prior work, MetaLoc~\cite{gao2022metaloc,10274764}, pioneered gradient-based meta-learning to enable rapid model adaptation in wireless localization. While a natural evolution to overcome MetaLoc's lack of environmental awareness is to condition meta-initialization on geometric descriptors using deterministic mapping networks, including hypernetworks~\cite{ha2016hypernetworks} and MLP-based structures~\cite{tolstikhin2021mlpmixer}, this approach exhibits a critical limitation. Specifically, deterministic conditioning yields only a single, point-estimate parameter correction for any given environmental descriptor. This deterministic approach can be insufficient because sub-regions sharing similar coarse environmental features may experience distinct local multipath variations, making distribution-based parameter corrections more suitable than fixed outputs.

As illustrated in Fig.~\ref{fig:background}, to fully exploit the interplay between environmental geometry and wireless propagation, this paper proposes integrating geometry-aware priors into a meta-learning framework tailored for fingerprint-based localization. The key idea is to utilize structured descriptors derived from 3D point clouds to provide an environment-specific parameter initialization for the fingerprinting model. However, realizing this geometry-guided fingerprinting approach requires overcoming two primary technical challenges:
\begin{itemize}
	\item \textbf{Stochastic Geometry-to-Parameter Mapping:} It is highly challenging to accurately model the complex and stochastic relationship between coarse environmental features and the high-dimensional network parameter shifts required for robust model adaptation.
	\item \textbf{Stable Joint Optimization under Limited Data:} Jointly optimizing an environment-dependent generator alongside the localization meta-parameters requires a robust training strategy to prevent overfitting and ensure convergence when only sparse support data is available.
\end{itemize}

In this paper, we propose EnvCoLoc, an environment-conditioned diffusion meta-learning framework for data-efficient WiFi localization. To overcome the limitations of deterministic conditioning, a generative model capable of learning a conditional distribution over parameter corrections is required. Among generative architectures, Denoising Diffusion Probabilistic Models (DDPMs)~\cite{ho2020denoising,rombach2022high} are particularly suitable because they provide a stable stochastic denoising-style parameterization for modeling complex conditional distributions without requiring an invertible generator. In EnvCoLoc, this generator is optimized end-to-end through the meta-learning objective rather than through a supervised noise-prediction loss. EnvCoLoc bridges the gap between physical propagation geometry and wireless signals by leveraging 3D point-cloud descriptors to condition a latent diffusion model. This generative model produces environment-specific parameter offsets that modulate a shared meta-initialization, enabling rapid and robust model adaptation to new environments with minimal labeled data. In summary, we make the following key contributions:
\begin{itemize}
	\item \textbf{Geometry-Aware Meta-Initialization:} We introduce a meta-initialization mechanism that uses point-cloud descriptors and a conditional latent diffusion model to generate task-specific parameter offsets. This explicitly captures environment-dependent propagation characteristics, providing a geometry-aware initialization for rapid adaptation.
	
	\item \textbf{Joint Diffusion Meta-Learning Optimization:} We develop a two-loop optimization framework to jointly learn the diffusion offset generator and the shared meta-parameters. Modeling the offset as a conditional distribution accommodates local multipath uncertainties prior to few-shot inner-loop fine-tuning.
	
	\item \textbf{Theoretical Convergence Analysis:} We provide a conditional local excess-loss analysis that decomposes finite-step adaptation error into initialization quality and inner-loop contraction. The analysis explains how environment-conditioned initialization can tighten the adaptation bound and specifies when stochastic generation can improve over a deterministic offset generator.
	
	\item \textbf{Real-World Validation:} Using a custom commodity-device WiFi platform, we collected paired \ac{csi} and 3D point-cloud data in both \ac{los} and \ac{nlos} environments. With only 10 support samples, EnvCoLoc reduces the mean localization error by 26.1\% (\ac{los}) and 24.9\% (\ac{nlos}) compared to the MetaLoc framework. Ablation results show that removing environment conditioning or replacing the diffusion generator with a deterministic mapping both degrade accuracy, confirming that both components contribute to the overall performance gain.
\end{itemize}

The remainder of this paper is organized as follows. Sec.~\ref{sec:FL} introduces the system model and data preprocessing pipeline. Sec.~\ref{sec:ECML} details the proposed EnvCoLoc framework. Sec.~\ref{sec:theory} provides the theoretical convergence analysis. Sec.~\ref{sec:ER} presents the experimental setup and comprehensive performance evaluation. Finally, Sec.~\ref{sec:conclusion} concludes the paper.

\section{System Scenario and Data Representation}
\label{sec:FL}
This section establishes the foundation for the proposed localization framework. We first describe the physical system scenario and the wireless channel model in Sec.~\ref{section:scenario}. Subsequently, Sec.~\ref{section:process} details the data preprocessing pipeline for WiFi \ac{csi}. Finally, Sec.~\ref{sec:environmental information} introduces the formulation of geometric environmental descriptors derived from 3D point clouds, which serves as a critical geometry-aware prior for the proposed model.

\subsection{System Scenario and Channel Modeling}~\label{section:scenario}

\begin{figure}[!t]
	\centering
	\includegraphics[width=1\columnwidth]{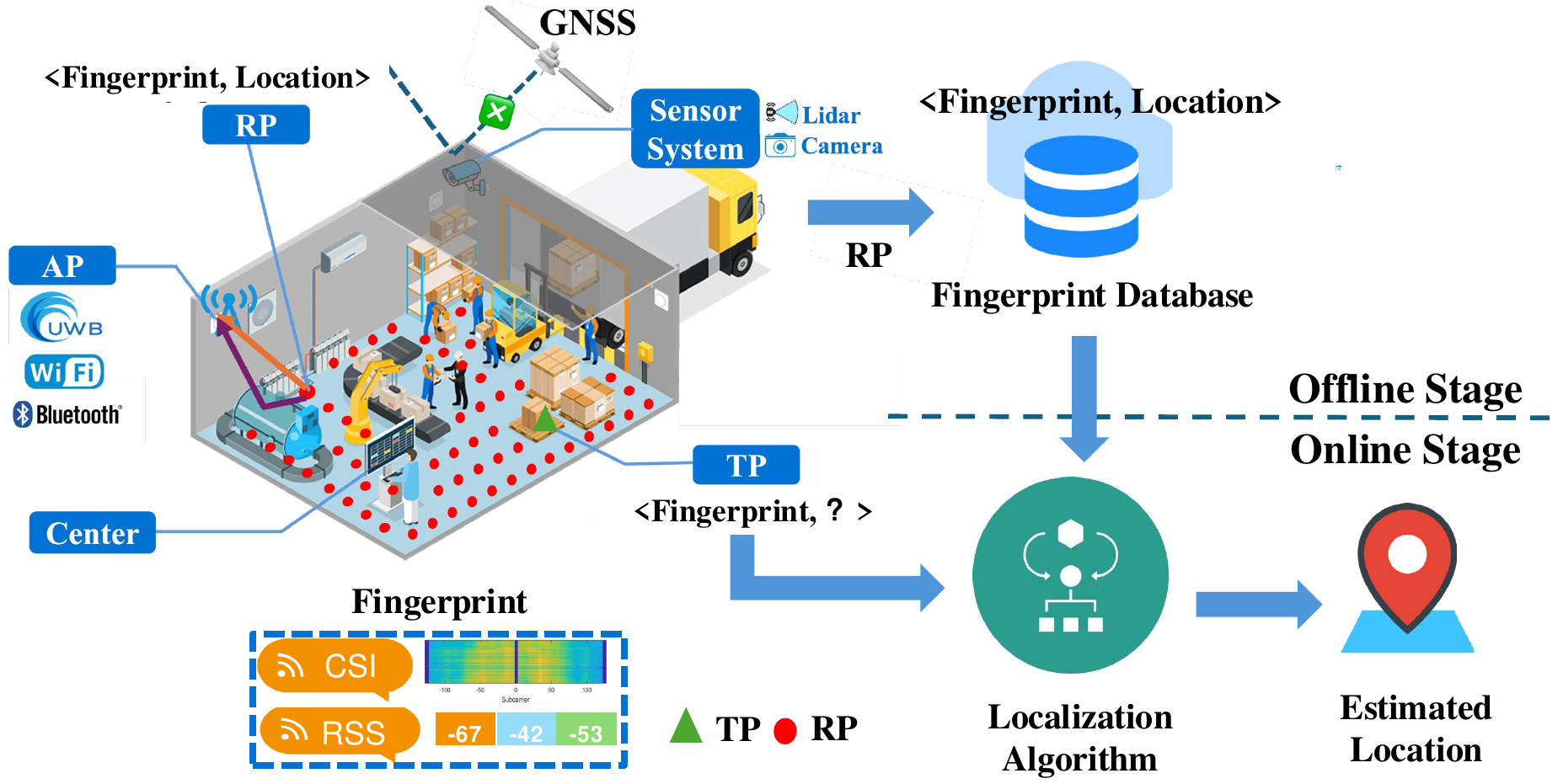}
	\caption{The scenario of fingerprinting-based localization in an uplink wireless communication system.}
	\label{fig:area}
\end{figure}
As illustrated in Fig.~\ref{fig:area}, we consider an uplink communication system consisting of $Z$ fixed WiFi \acpl{ap} and a user equipped with a mobile phone. Both the \acpl{ap} and the mobile device are equipped with omnidirectional antennas. Specifically, we define a survey area of size $A=(v \times w)\,\mathrm{m}^2$, where $v$ and $w$ represent the length and width, respectively. The area is discretized into $G=\left\lceil\frac{v}{\Delta r}\right\rceil \times \left\lceil\frac{w}{\Delta r}\right\rceil$ grid points with a uniform spacing of $\Delta r\,\mathrm{m}$, where $\lceil \cdot \rceil$ denotes the ceiling function. The set of reference grid-point locations in a 3D Cartesian coordinate system is denoted as $\left\{\boldsymbol{p}_g=\left(x_g, y_g, h_g\right) \mid g=1, \ldots, G\right\}$.

The wireless propagation channel between the user and the \acpl{ap} comprises a direct \ac{los} path and multiple indirect \ac{nlos} paths resulting from reflections, diffractions, and scattering. Frequency-domain \ac{csi} effectively captures these small-scale multipath effects. Specifically, the \ac{csi} measurement at the $k$-th subcarrier frequency $f_k$ is represented as a complex number:
\begin{equation} 
	H\left(f_k\right)=\left|H\left(f_k\right)\right| e^{j \angle H(f_k)}, \quad k=1, \ldots, K, \label{eq:csi} 
\end{equation}
where $\left|H\left(f_k\right)\right|$ denotes the amplitude and $\angle H(f_k)$ represents the phase. The total number of subcarriers $K$ depends on the configured channel bandwidth, as specified by the IEEE 802.11ac standard~\cite{ieee80211ac}: 64 for 20 MHz, 128 for 40 MHz, and 256 for 80 MHz.

\subsection{CSI Fingerprint Preprocessing}~\label{section:process}
WiFi fingerprinting is attractive for indoor localization because it can be implemented on commodity devices to provide fine-grained \ac{csi} measurements without dedicated ranging hardware~\cite{gringoli2019free}. In this work, we use \ac{csi} amplitude as the primary localization fingerprint because raw \ac{csi} phases on commercial WiFi devices are vulnerable to packet-level random offsets caused by hardware imperfections, such as carrier frequency offsets and sampling frequency offsets. Following the preprocessing practice in CRISLoc~\cite{CRISLoc}, we first zero out guard subcarriers that may contain random or meaningless values. We then apply Mahalanobis-distance-based packet screening to the CSI amplitude matrix and discard the top 10\% packets with the largest deviations from the overall distribution. This procedure removes bursty abnormal packets while preserving the dominant spatial pattern in the amplitude fingerprints.

\begin{figure}[!t]
	\centering
	\subfigure{\includegraphics[scale=0.35]{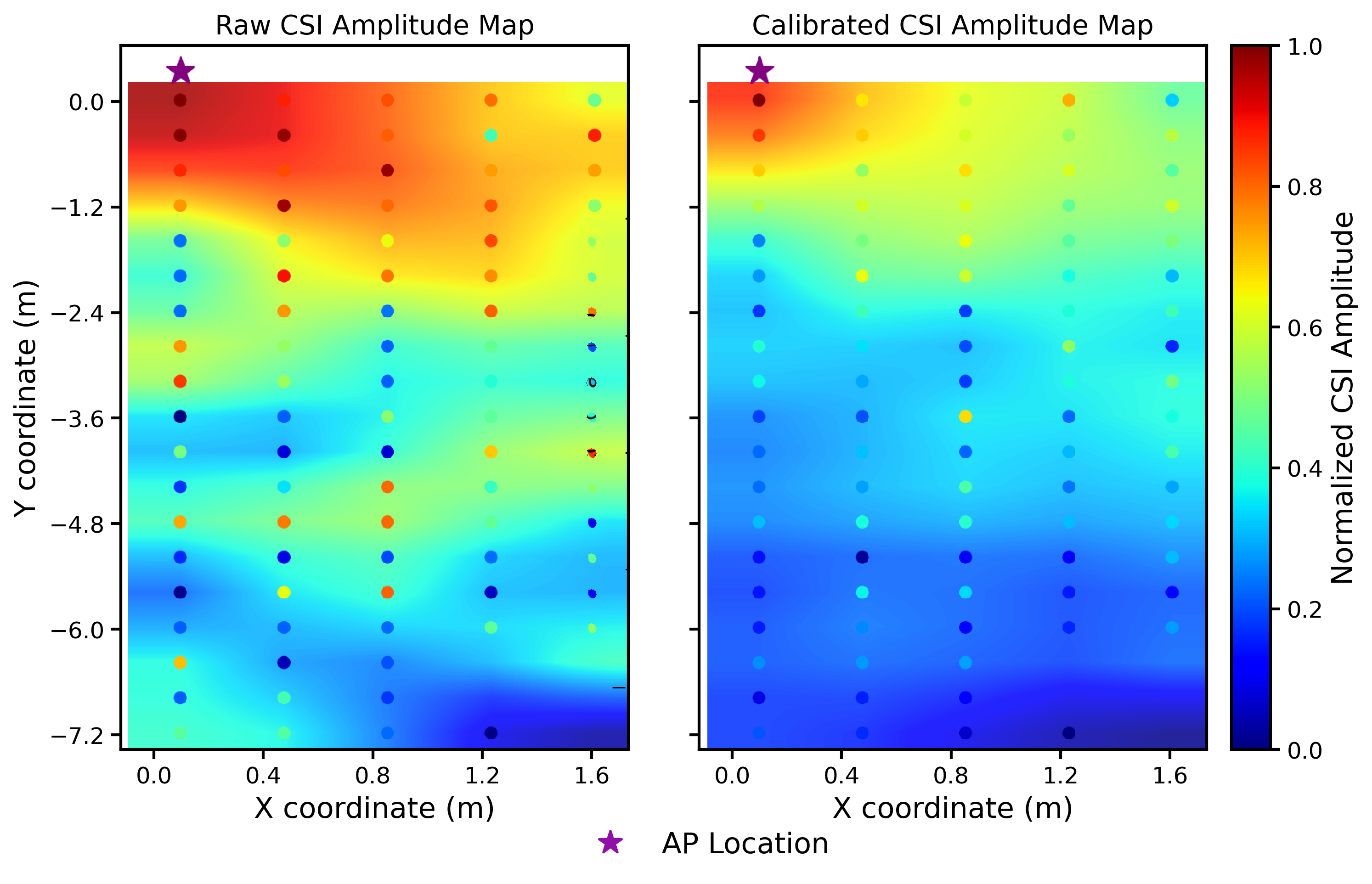}}
	\caption{Raw and RSSI-calibrated CSI amplitude heatmaps for the same \ac{ap}. The purple star marks the \ac{ap} position at (0, 0.5).}
	\label{fig:amplitude}
\end{figure}

The raw CSI amplitude reported by commercial WiFi NICs is also heavily scaled by the \ac{agc} mechanism, which weakens its consistency with the physical path-loss trend~\cite{CRISLoc}. To mitigate this effect, we calibrate the CSI amplitude using the corresponding \ac{rssi} recorded in each packet header. Since \ac{agc} mainly introduces a scalar gain across the frequency band, a uniform calibration factor is applied to all subcarriers within the same packet. Fig.~\ref{fig:amplitude} compares the raw and calibrated CSI amplitude maps. Before calibration, survey points closer to the \ac{ap} may still show lower amplitudes. After RSSI-based calibration, the amplitude distribution becomes more consistent with the path-loss trend, with larger values appearing closer to the \ac{ap}. This calibrated amplitude is therefore more suitable for spatial localization.

\subsection{Environmental Descriptor Extraction}\label{sec:environmental information}
To characterize propagation-relevant environmental factors, we incorporate geometric information from the surrounding environment into the localization framework. The environmental data are collected using a Leica RTC360 LT terrestrial laser scanner, which measures reflected laser signals and generates a 3D point cloud of the survey area~\cite{leica_rtc360lt_datasheet}. We represent the point cloud as~\cite{10757755}
\begin{equation}
	P_{\text{env}} = \{(p_i, s_i) \mid i = 1, \dots, I\},
\end{equation}
where $p_i$ denotes the 3D coordinate of the $i$-th point, $s_i$ denotes its auxiliary attribute such as color or material label, and $I$ is the number of points. To align the point cloud with the WiFi survey grid, we register it to the global WiFi coordinate frame using at least three markers placed at known grid locations. The corresponding 3D rigid transformation is estimated by least-squares fitting in CloudCompare.

\begin{figure}[!t]
	\centering
	\includegraphics[scale=0.3]{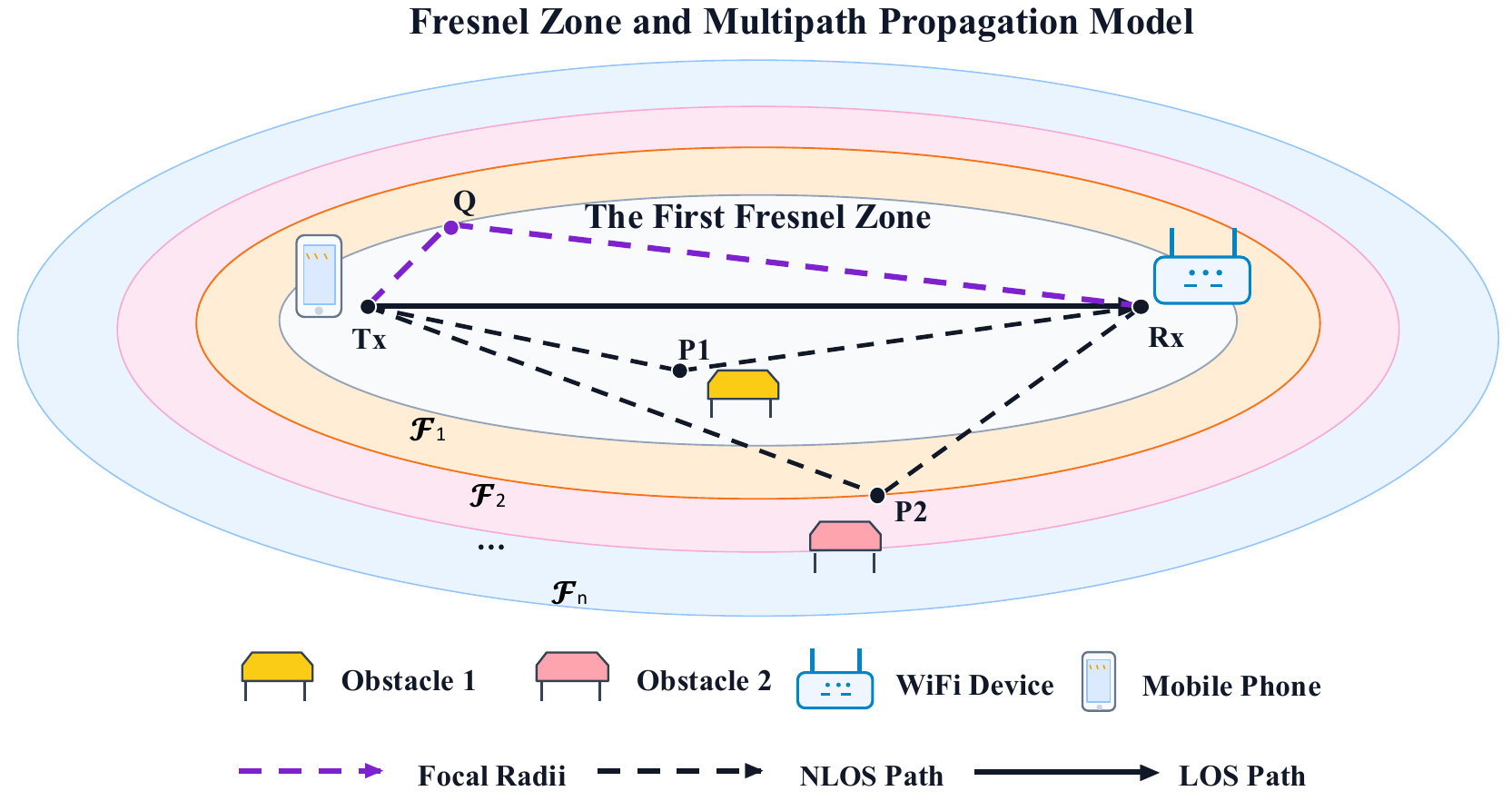}
	\caption{Illustration of the Fresnel-zone geometry in an uplink wireless communication system.}
	\label{fig:tr}
\end{figure}

To connect the point cloud with wireless propagation, we use the Fresnel zone model~\cite{wu2022wifi}. As illustrated in Fig.~\ref{fig:tr}, the Fresnel zones $\mathcal{F}_n \ (n=1,2,\dots,N)$ are a set of confocal ellipsoids whose two foci are the transmitter (Tx) and receiver (Rx). A point $Q_n$ on the boundary of the $n$-th Fresnel zone satisfies
\begin{equation}
	|T_x Q_n| + |Q_n R_x| - |T_x R_x| = n \cdot \frac{\lambda_c}{2},\quad n = 1,\dots,N,
\end{equation}
where $\lambda_c$ is the carrier wavelength. The first Fresnel zone corresponds to the innermost ellipsoid with $n=1$. For WiFi devices operating at 5~GHz, where $\lambda_c \approx 6$~cm, this zone forms a narrow region around the \ac{los} path. Obstacles inside this region can change the received signal through reflection, diffraction, and absorption. Therefore, we extract environmental information from the first Fresnel zone, which contains the main propagation region of the direct path~\cite{wu2022wifi,yang2024orchloc}.

Based on the registered point cloud and the first Fresnel zone, we construct an environmental descriptor $\mathbf{y}$ for each survey location. For each \ac{ap}-user link, we identify the first Fresnel zone and collect the point-cloud points inside it. These points are grouped into $C$ material categories, and the proportion of each material is calculated. The resulting material-proportion vector summarizes the local propagation environment, since different materials affect radio signals in different ways; for example, metal tends to cause strong reflection, while glass may allow partial transmission. We concatenate the material-proportion vectors from all $Z$ \acpl{ap} to obtain the final environment-conditioned descriptor $\mathbf{y} \in \mathbb{R}^{Z \cdot C}$ for each survey location.

\begin{figure}[!t]
	\centering
	\subfigure[]{\includegraphics[scale=0.4]{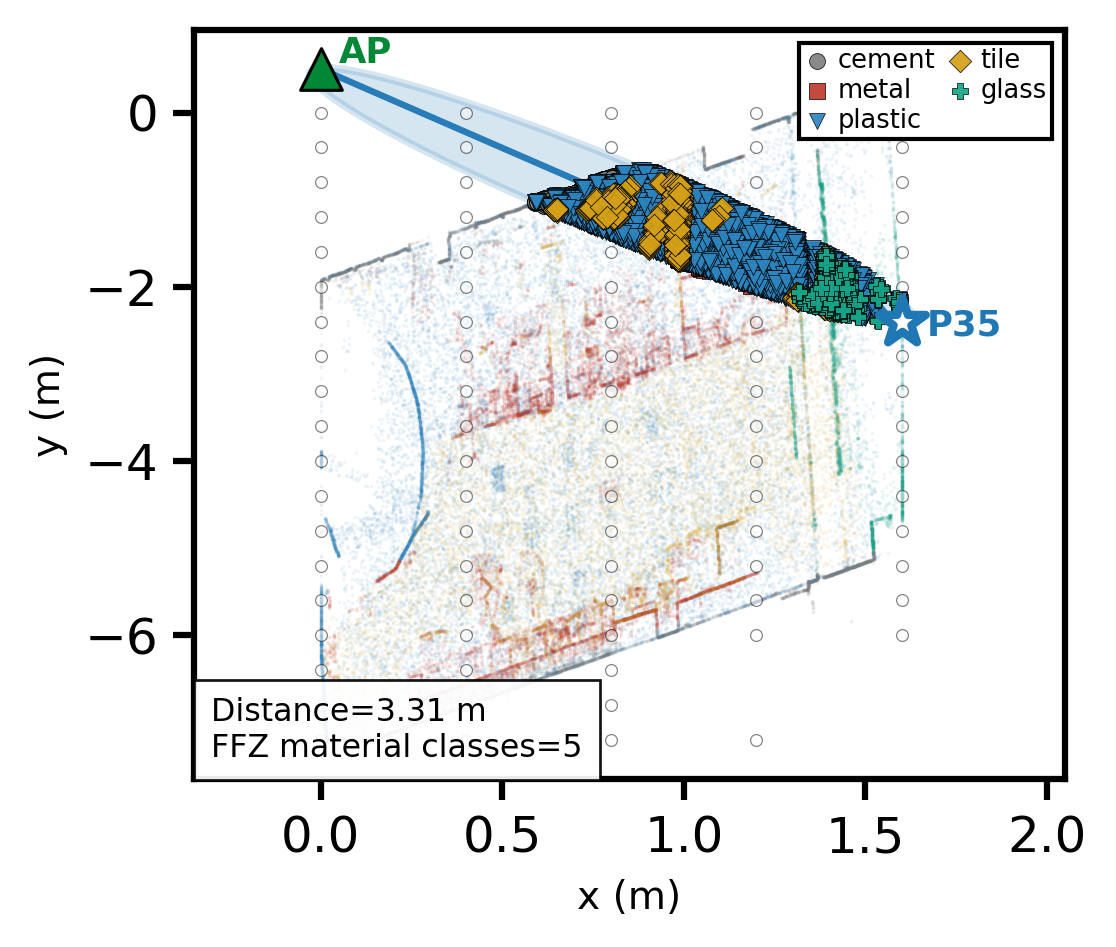}}
	\subfigure[]{\includegraphics[scale=0.4]{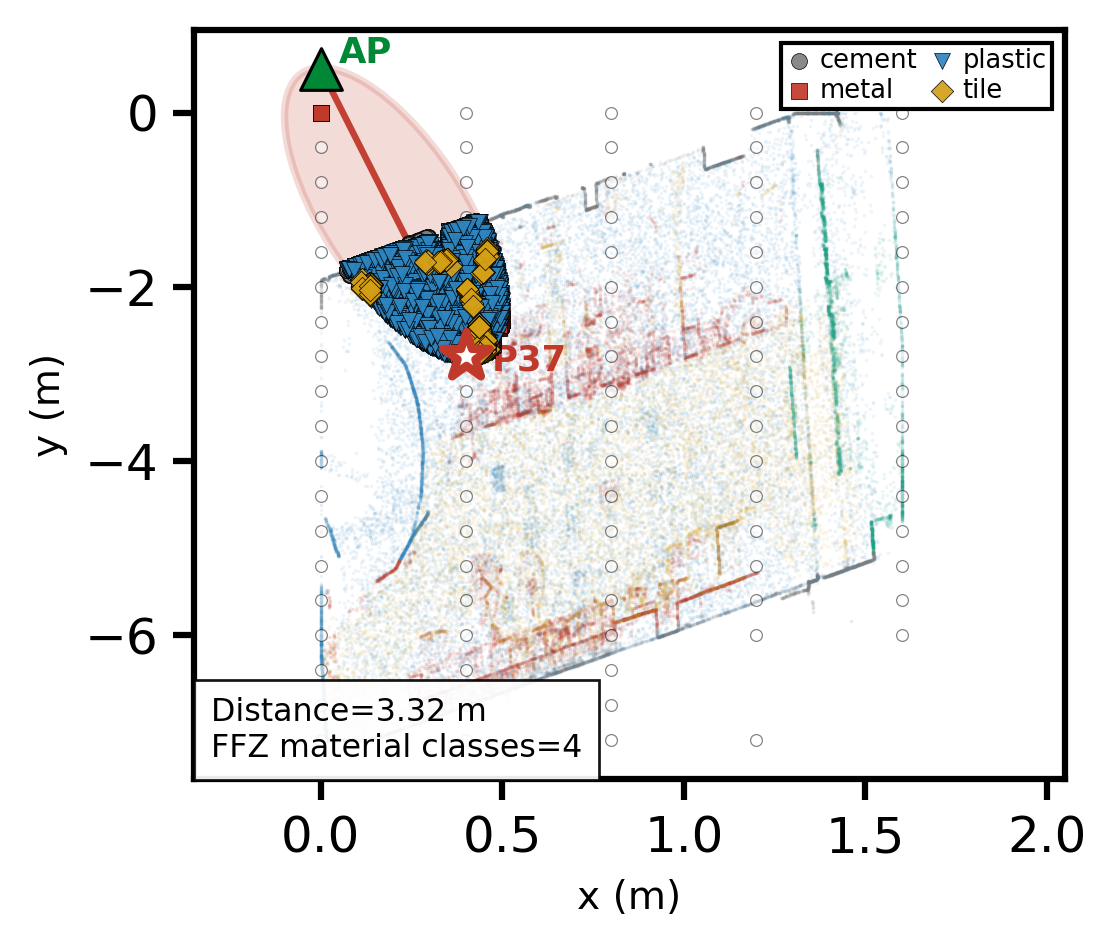}}
	\subfigure[]{\includegraphics[scale=0.4]{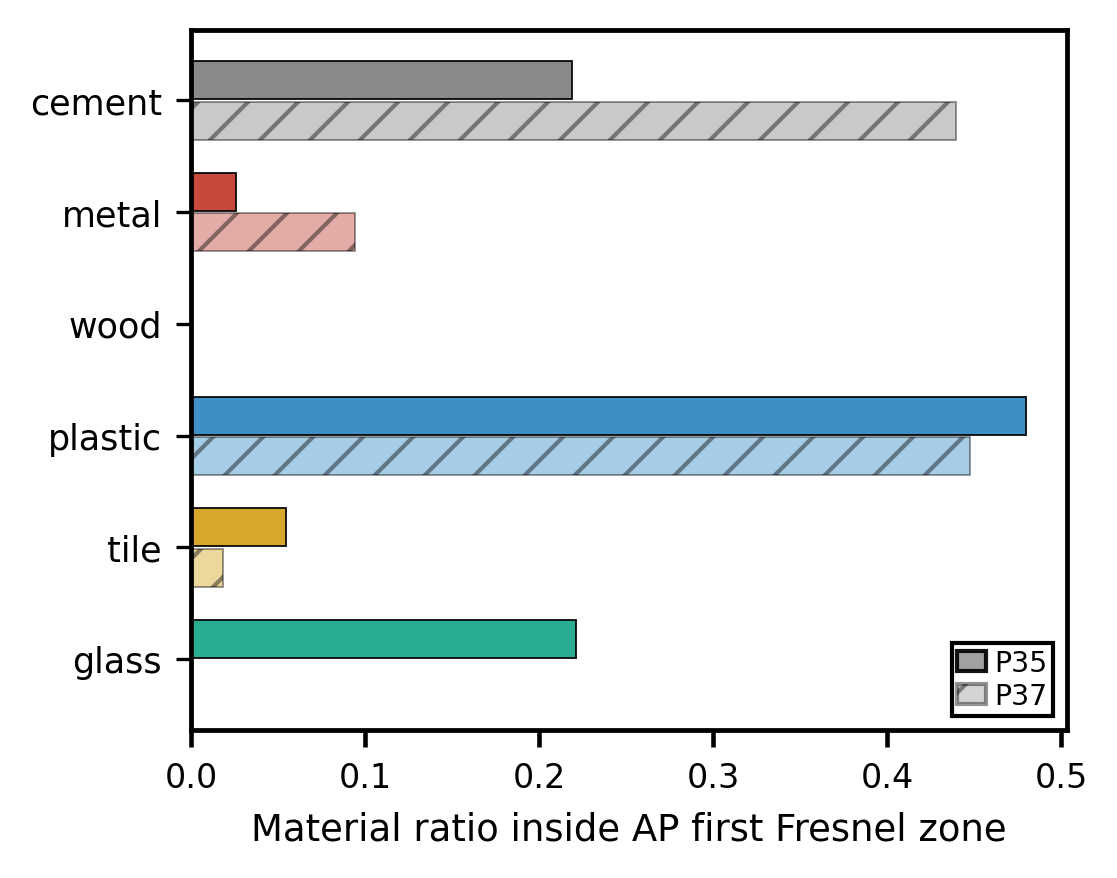}}
	\subfigure[]{\includegraphics[scale=0.4]{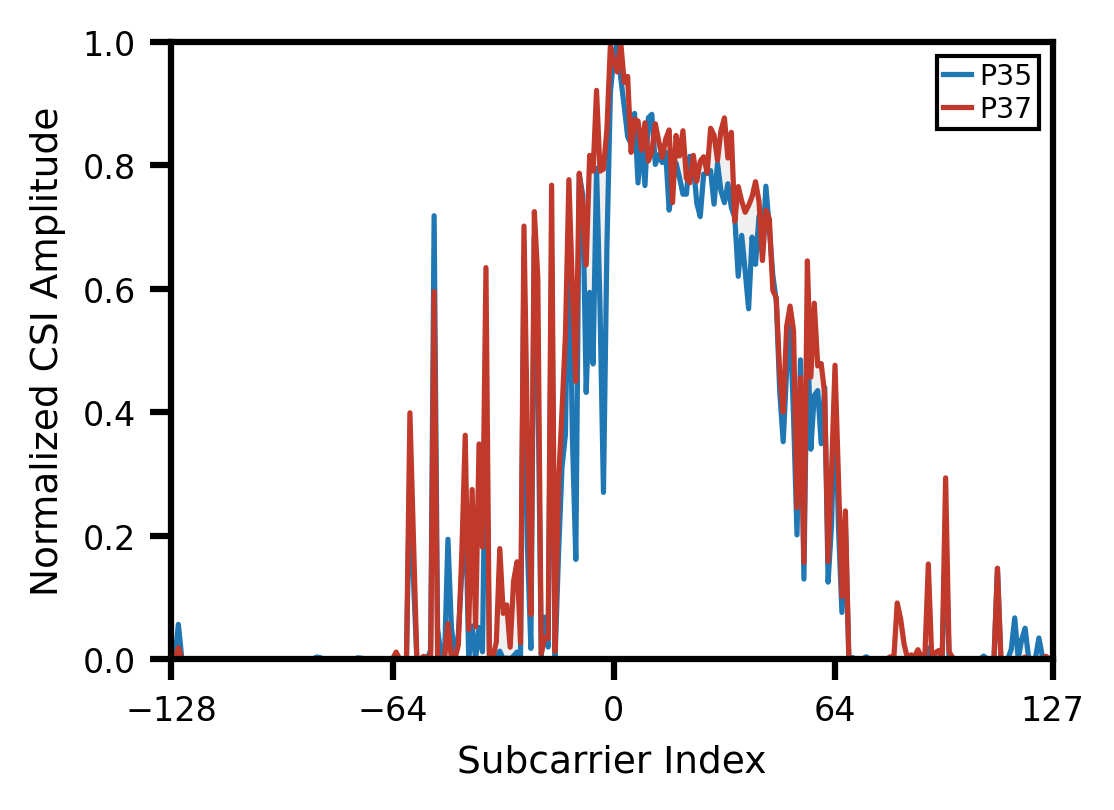}}
	\caption{An example of CSI ambiguity between two receiver locations. Positions P35 and P37 are separated by 1.26~m but have nearly identical AP distances and similar CSI amplitude responses. Their first Fresnel zone obstacle distributions and material compositions are different, showing that material-aware environmental information can help distinguish CSI-ambiguous locations.}
	\label{fig:ambiguity}
\end{figure}

The CSI ambiguity case in Fig.~\ref{fig:ambiguity} further motivates the use of environmental descriptors. Positions P35 and P37 are separated by 1.26~m, while their distances to the same AP are nearly identical, at 3.312~m and 3.324~m, respectively. Fig.~\ref{fig:ambiguity}(d) shows that the two links also have similar CSI amplitude responses. Thus, CSI amplitude alone may not provide enough information to distinguish spatially separated locations with similar AP distances.

In contrast, the corresponding first Fresnel zones contain different obstacle distributions and material compositions. Figs.~\ref{fig:ambiguity}(a) and (b) show the first Fresnel zone regions of the two links, where point-cloud obstacles are colored by material category. Fig.~\ref{fig:ambiguity}(c) compares their material proportions. These differences suggest different reflection, diffraction, and attenuation conditions along the two propagation paths. Therefore, the material-aware point-cloud descriptor provides additional environmental information for resolving CSI ambiguity in localization.

\section{Environment-Conditioned Meta-Learning}
\label{sec:ECML}

Meta-learning optimizes a shared set of meta-parameters $\theta$ as a task-agnostic initialization, enabling rapid adaptation to new tasks with a few gradient steps. In the considered localization setting, the network $f_\theta(\cdot)$ maps CSI measurements to 3D location vectors $\boldsymbol{p}=(x,y,h)$. Each task $\tau_i \sim \mathcal{P}(\tau)$ corresponds to a randomly sampled sub-region within a scenario, which is treated as a local environment where obstacle layouts and multipath conditions are approximately shared. Each task comprises a support set $D_{\tau_i}^s$ of $N_s$ labeled CSI--location pairs for task-specific gradient adaptation, and a query set $D_{\tau_i}^q$ of $N_q$ labeled pairs for evaluating the adapted model. In addition, the support samples are associated with the point-cloud-derived environmental descriptors defined in Sec.~\ref{sec:environmental information}; these descriptors are used to condition the meta-initialization via the diffusion generator.

The support-set environmental descriptors provide explicit information about the propagation environment, such as geometry and material properties. We organize them into a task-level matrix $\mathbf{Y}_{\tau_i} = [\mathbf{y}_{\tau_i,1}, \mathbf{y}_{\tau_i,2}, \dots, \mathbf{y}_{\tau_i,N_s}]^\top$,
where $\mathbf{y}_{\tau_i,j} \in \mathbb{R}^{Z \cdot C}$ is the descriptor of the $j$-th support point, so that $\mathbf{Y}_{\tau_i} \in \mathbb{R}^{N_s \times ZC}$. Under environment-dependent distribution shifts, a single shared initialization $\theta$ may lie far from the task-specific optimum, particularly when the support set is small. By leveraging $\mathbf{Y}_{\tau_i}$ to capture propagation-relevant geometric cues, we learn an environment-conditioned offset $\Delta\theta_{\tau_i}$ and construct the modulated initialization $\theta_i^{(0)} = \theta + \Delta\theta_{\tau_i}$.

The additive form is motivated by two considerations. First, following the residual-learning principle~\cite{he2016deep}, the generator produces a small environment-conditioned correction $\Delta\theta_{\tau_i}$ that adjusts the shared meta-initialization~$\theta$ toward the target environment before gradient-based fine-tuning on wireless support samples. This is considerably easier than generating the full task-specific parameter vector from scratch. Second, the offset $\Delta\theta_{\tau_i}$ injects geometric propagation knowledge extracted from point-cloud descriptors, complementing the wireless-signal-based meta-initialization with environment-specific information that is difficult to infer from a few CSI measurements alone.

\subsection{Latent Diffusion Generator for Meta-Parameter Offsets}
\label{sec:diffusion_gen}

Based on the above task representation, we introduce an environment-conditioned latent diffusion generator that models the conditional distribution $p_\phi(\Delta\theta \mid \mathbf{Y})$, where $\mathbf{Y}$ denotes the support-set environmental descriptor matrix and $\Delta\theta$ represents task-specific parameter offsets. The diffusion state $z_t$ is represented as a one-dimensional parameter-offset sequence whose length equals $\mathrm{dim}(\theta)$. The latent dimension used below refers to the hidden channel width of the U-Net denoiser rather than to the dimensionality of the generated offset vector. We adopt this conditional diffusion formulation because similar environmental descriptors may still require different parameter corrections under local multipath variations, making a conditional distribution more appropriate than a deterministic mapping. The generator therefore consists of a forward noising process and a reverse denoising process.

\begin{figure}
    \centering
    \includegraphics[scale=0.32]{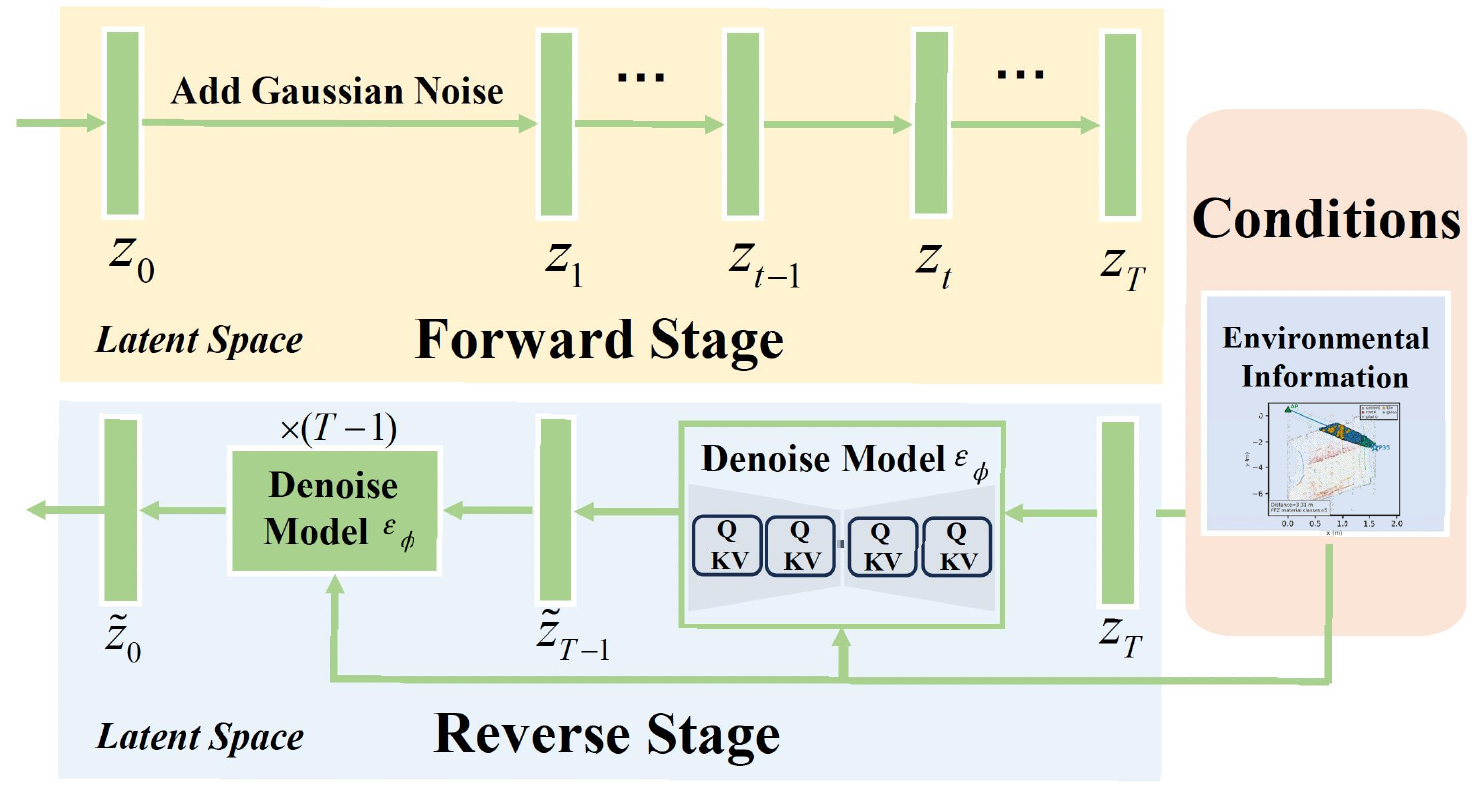}
    \caption{Workflow of the environment-conditioned denoising model.}
    \label{fig:denoise}
\end{figure}

\begin{itemize}
    \item \textbf{Forward process:} 
    As illustrated in Fig.~\ref{fig:denoise}, let $z_0$ denote the task-specific parameter-offset sequence corresponding to $\Delta\theta$. The forward process progressively corrupts $z_0$ over $T_{\mathrm{d}}$ steps with a noise schedule $\{\alpha_t\}_{t=1}^{T_{\mathrm{d}}}$, $\alpha_t \in (0,1)$, equivalently $\alpha_t=1-\beta_t^{\mathrm{d}}$:
    \begin{equation}
    z_t = \sqrt{\alpha_t}\, z_{t-1} + \sqrt{1-\alpha_t}\, \epsilon_t, \quad \epsilon_t \sim \mathcal{N}(0, \mathbf{I}).
    \end{equation}
    Letting $\bar{\alpha}_t = \prod_{s=1}^{t} \alpha_s$, the marginal admits the closed form $z_t = \sqrt{\bar{\alpha}_t}\, z_0 + \sqrt{1-\bar{\alpha}_t}\, \epsilon$,
    and the schedule ensures $\bar{\alpha}_{T_{\mathrm{d}}} \approx 0$ so that $z_{T_{\mathrm{d}}} \approx \mathcal{N}(0, \mathbf{I})$.

    \item \textbf{Reverse process:} 
    The reverse diffusion process progressively recovers $z_0$ from $z_{T_{\mathrm{d}}} \sim \mathcal{N}(0,\mathbf{I})$.
    It is parameterized by a denoising network 
    $\epsilon_\phi(z_t, t, \mathbf{Y})$, which predicts the noise component $\epsilon$ at step $t$ 
    conditioned on the support-set descriptor matrix $\mathbf{Y}$.
    At each step, the denoised estimate is obtained via
    \begin{equation}
    \tilde{z}_{t-1} = \frac{1}{\sqrt{\alpha_t}}\!\left(z_t - \frac{1-\alpha_t}{\sqrt{1-\bar{\alpha}_t}}\, \epsilon_\phi(z_t, t, \mathbf{Y})\right) + \sigma_t\, \mathbf{n},
    \end{equation}
    where $\mathbf{n} \sim \mathcal{N}(0,\mathbf{I})$ for $t > 1$, with $\mathbf{n}=0$ for $t=1$, and $\sigma_t$ is the posterior standard deviation determined by the noise schedule.
    The denoising network adopts a U-Net backbone augmented with cross-attention layers 
    to inject environmental context into the denoising dynamics.
    Specifically, the cross-attention mechanism is defined as
    \begin{equation}
    \text{Attention}(\mathbf{Q},\mathbf{K},\mathbf{V}) = 
    \text{softmax}\!\left(\frac{\mathbf{Q}\mathbf{K}^\top}{\sqrt{d}}\right) \mathbf{V},
    \end{equation}
    where $\mathbf{Q} = W_Q \varphi(z_t)$, $\mathbf{K} = W_K \mathbf{Y}$, and $\mathbf{V} = W_V \mathbf{Y}$.
    Here, $\varphi(z_t)$ denotes a flattened intermediate U-Net feature map and does not introduce additional learnable parameters,
    while $W_Q$, $W_K$, and $W_V$ are learnable projection matrices and $d$ is the feature dimension.
    This design enables the denoising network to adapt the reverse diffusion process to environment-specific characteristics.
\end{itemize}

The generated offset $\Delta\theta$ has the same dimensionality as the localization-network parameter vector~$\theta$, providing environment-specific corrections to all weights and biases.
For the localization network used in the experiments, the input dimension is $Z\times256=768$ with $Z=3$ APs and the hidden dimensions are $[256,128,64]$. This gives $\mathrm{dim}(\theta)\approx 238\mathrm{K}$ trainable weights and biases. The U-Net processes the offset sequence with shared convolutional kernels and a latent channel width of 64. Its final $1\times 1$ convolution maps the hidden channels back to a one-channel offset sequence of length $\mathrm{dim}(\theta)$, avoiding a dense projection from a 64-dimensional code to the full parameter vector.

\begin{figure}
    \centering
	\includegraphics[width=\columnwidth]{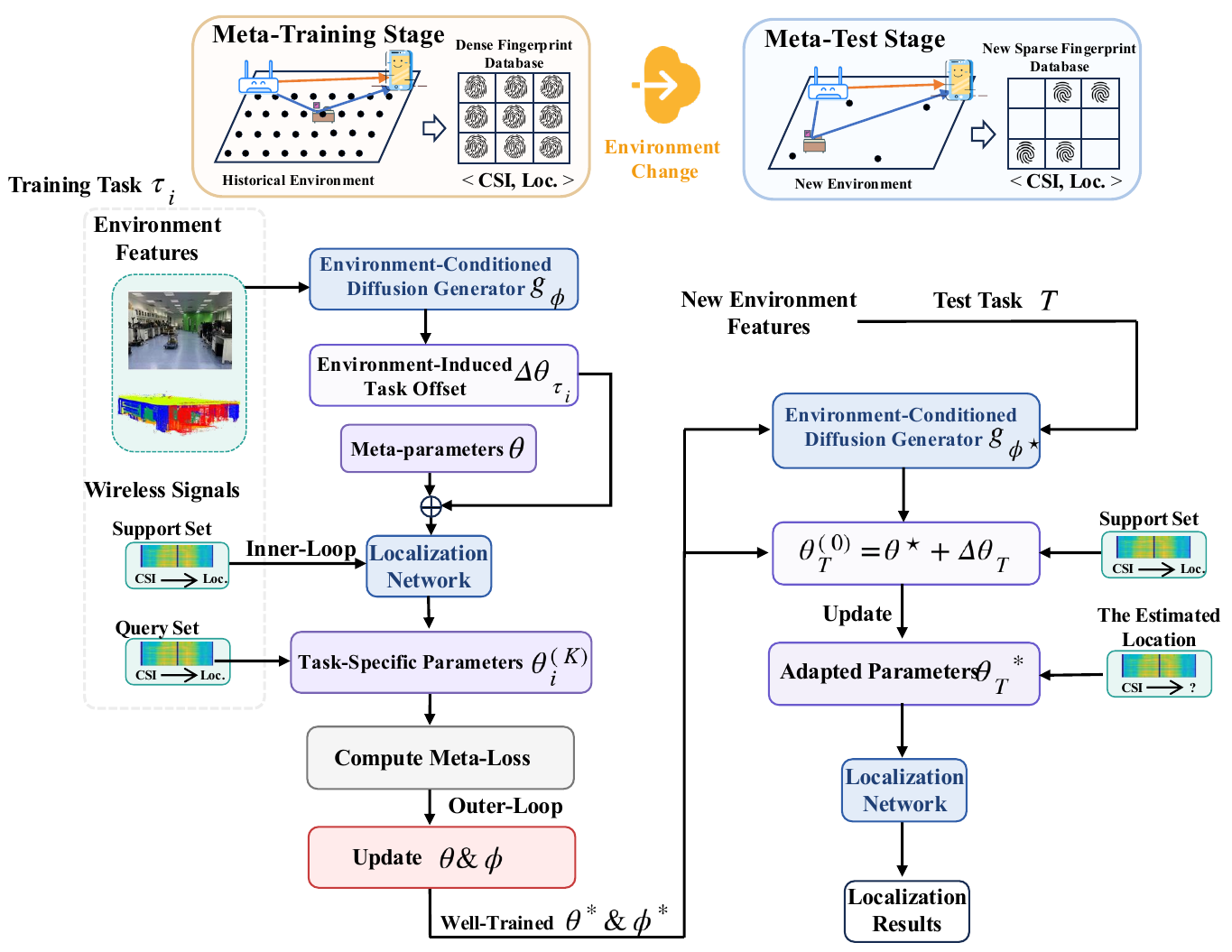}
    \caption{Environment-conditioned meta-learning framework for wireless fingerprint localization.}
    \label{fig:envcond-framework}
\end{figure}
\subsection{Environment-Conditioned Meta-Training}

During the meta-training stage, the meta-parameters \(\theta\) and diffusion generator parameters \(\phi\) are jointly optimized over a set of training tasks
\(\{\tau_i\}_{i=1}^M \sim \mathcal{P}(\tau)\).
For each task, the diffusion generator produces a task-specific parameter offset:
\begin{equation}
\theta_i^{(0)} = \theta + g_\phi(z_{\tau_i}, \mathbf{Y}_{\tau_i}),
\end{equation}
where \(z_{\tau_i} \sim \mathcal{N}(0,\mathbf{I})\) is a latent variable sampled from the standard Gaussian prior and transformed by the reverse diffusion process conditioned on \(\mathbf{Y}_{\tau_i}\).
This initialization captures environment-dependent variations that are difficult to learn from few support samples alone, effectively shaping the prior distribution of task parameters.

The initialized parameters are then refined through \(K_{\mathrm{in}}\) steps of gradient-based adaptation on the support set:
\begin{equation}
\theta_i^{(k+1)} = \theta_i^{(k)} - \alpha \nabla_{\theta_i^{(k)}} \mathcal{L}_{\tau_i}(\theta_i^{(k)}; D_{\tau_i}^s), 
\quad k = 0,\dots,K_{\mathrm{in}}-1,
\end{equation}
where \(\alpha\) is the inner-loop step size. 
The performance of the adapted parameters is evaluated on the query set
\(D_{\tau_i}^q\), and the meta-loss is computed as the sum of query losses
across all tasks:
\begin{equation}
\mathcal{L}_{\mathrm{meta}}(\theta, \phi)
= \sum_{i=1}^{M}
\mathcal{L}_{\tau_i}\!\left(\theta_i^{(K_{\mathrm{in}})}; D_{\tau_i}^q\right).
\end{equation}

The meta-parameters and diffusion generator parameters are learned by solving
\begin{equation}
(\theta^\star, \phi^\star)
= \arg\min_{\theta,\,\phi}
\;\mathbb{E}_{\{\tau_i\} \sim \mathcal{P}(\tau)}
\big[ \mathcal{L}_{\mathrm{meta}}(\theta, \phi) \big],
\label{eq:meta_obj}
\end{equation}
which is optimized in practice via gradient-based outer-loop updates:
\begin{equation}
(\theta, \phi) \leftarrow (\theta, \phi)
- \beta \nabla_{(\theta,\phi)} \mathcal{L}_{\mathrm{meta}}(\theta, \phi),
\end{equation}
where \(\beta\) is the outer-loop step size. Here, $\theta^\star$ encodes shared wireless-signal representations learned from CSI fingerprints across environments, while $\phi^\star$ captures how physical environment geometry, as described by point-cloud descriptors, modulates the task-specific parameter corrections.

Two notable properties emerge from the proposed joint optimization framework.
First, since ground-truth offsets are unavailable, the parameterized generator $\phi$ cannot be trained using a standard noise-prediction objective. Instead, the diffusion generator is optimized end-to-end via the meta-loss: the generated offset is injected into the localization network, and the resulting query-set loss after inner-loop adaptation is back-propagated through both the gradient-based updates and the reverse diffusion process. Consequently, $\mathcal{L}_{\mathrm{meta}}$ in Eq.~\eqref{eq:meta_obj} serves as the sole supervision signal for $\phi$. Second, no explicit norm regularization is imposed on $\Delta\theta$. Instead, its magnitude is implicitly controlled by the meta-objective. If the diffusion generator produces an offset that moves the initialization $\theta + \Delta\theta$ excessively far from the task-specific optimum, the limited $K_{\mathrm{in}}$-step adaptation on a small support set may be insufficient to correct this mismatch. As a result, the query-set performance decreases and the meta-loss $\mathcal{L}_{\mathrm{meta}}$ increases. The resulting outer-loop gradients then penalize such harmful offsets through the optimization of $\phi$, thereby discouraging parameter shifts that are too large to be absorbed by the inner-loop adaptation. Therefore, although no auxiliary $\ell_2$ penalty is explicitly imposed on $\Delta\theta$, the bilevel objective still provides an implicit control over its magnitude.

\subsection{Environment-Conditioned Meta-Test}

In the meta-test stage, the diffusion generator parameters $\phi^\star$ obtained during meta-training are frozen, since the generator has already learned a generalizable mapping from environmental descriptors to parameter offsets across diverse training tasks, and fine-tuning it on a single sparse support set would risk overfitting. Given a new environment with sparse support set $D_T^s$ and descriptor matrix $\mathbf{Y}_T$, the frozen generator produces an environment-conditioned offset $\Delta\theta_T = g_{\phi^\star}(z, \mathbf{Y}_T)$, and the task-specific parameters are initialized as
\begin{equation}
\theta_T^{(0)} = \theta^\star + g_{\phi^\star}(z, \mathbf{Y}_T), 
\quad z \sim \mathcal{N}(0, \mathbf{I}).
\end{equation}

The initialization is then refined via $K_{\mathrm{in}}$ steps of gradient-based adaptation on the support set, producing the adapted parameters:
\begin{equation}
\theta_T^{(k+1)} = \theta_T^{(k)} - \alpha \nabla_{\theta_T^{(k)}} \mathcal{L}_T(\theta_T^{(k)}; D_T^s), 
\quad k = 0, \dots, K_{\mathrm{in}}-1.
\end{equation}
This two-stage procedure decouples the roles of the diffusion generator and the inner-loop optimizer: the generator injects geometric prior knowledge from $\mathbf{Y}_T$ to bring the initialization close to the target region, while gradient adaptation compensates for the residual task-specific discrepancy using the wireless support samples.

We further define the task-specific empirical optimum as
\begin{equation}
\theta_T^\star = \arg\min_\theta \mathcal{L}_T(\theta; D_T^s).
\end{equation}
Due to the limited adaptation budget of $K_{\mathrm{in}}$ steps and the sparsity of the support data, the adapted parameters $\theta_T^{(K_{\mathrm{in}})}$ serve as a practical approximation to $\theta_T^\star$. In other words, $\theta_T^{(K_{\mathrm{in}})}$ reflects the realized outcome of finite-step adaptation, whereas $\theta_T^\star$ represents the ideal empirical optimum. This distinction provides a principled basis for analyzing bias, variance, and convergence in unseen environments, as presented in the next section.

\section{Theoretical Analysis}
\label{sec:theory}

This section analyzes EnvCoLoc under local regularity conditions for finite-step adaptation. The analysis serves three purposes. First, it relates the adaptation loss to the distance between the generated initialization and the task-specific optimum. Second, it identifies when environmental descriptors reduce the expected initialization error relative to an environment-agnostic initializer. Third, it separates the contribution of stochastic diffusion generation from that of a deterministic offset mapping.

\subsection{Preliminaries and Assumptions}
\label{sec:assumptions}

Let \(p_{\tau_i}\) denote the data distribution of task \(\tau_i\). For a localization model with parameter \(\theta\), let \({\cal L}_{\tau_i}(\theta)\) denote the task loss used for finite-step adaptation, implemented as the MSE over the support samples of task \(\tau_i\). Let \(\theta_{\tau_i}^\star\) be the corresponding optimal task-specific parameter. Given the shared meta-initialization \(\theta^\star\), define the ideal task offset as
\begin{equation}
\Delta_{\tau_i}^\star := \theta_{\tau_i}^\star-\theta^\star .
\end{equation}
The environment-conditioned initialization of task \(\tau_i\) is
\begin{equation}
\theta_i^{(0)}=\theta^\star+g_{\phi}(z_{\tau_i},\mathbf{Y}_{\tau_i}),
\end{equation}
where \(g_{\phi}\) is the stochastic generator, \(z_{\tau_i}\) is the latent variable, and \(\mathbf{Y}_{\tau_i}\) is the environmental descriptor. The following assumptions are used for the local analysis.

\begin{assumption}[Local Optimization Regularity]
\label{assump:basic}
There exists a neighborhood of \(\theta_{\tau_i}^\star\) that contains the inner-loop iterates \(\{\theta_i^{(k)}\}_{k=0}^{K_{\mathrm{in}}}\). Within this neighborhood, each task loss is \(L\)-smooth and \(\mu\)-strongly convex, with a local Hessian upper bound \(\nabla^2 {\cal L}_{\tau_i} \preceq \lambda I\). The inner-loop step size satisfies \(0<\alpha\le 1/L\).
\end{assumption}

\begin{assumption}[Generator Statistical Bounds]
\label{assump:generator_bound}
For each task \(\tau_i\), the generated offset admits the decomposition
\begin{equation}
  g_{\phi}(z_{\tau_i},\mathbf{Y}_{\tau_i})
  =m_{\tau_i}+\xi_{\tau_i}+\rho_{\tau_i},
\end{equation}
where \(m_{\tau_i}=\mathbb{E}[g_{\phi}(z_{\tau_i},\mathbf{Y}_{\tau_i})|\mathbf{Y}_{\tau_i}]\) is the deterministic component, \(\xi_{\tau_i}\) denotes the stochastic sampling fluctuation, and \(\rho_{\tau_i}\) denotes the residual generation error. With expectations over tasks and generator randomness,
\begin{equation}
\mathbb{E}\|m_{\tau_i}-\Delta_{\tau_i}^\star\|_2^2\le\psi_b,
\quad
\mathbb{E}\|\xi_{\tau_i}\|_2^2\le\psi_v,
\quad
\mathbb{E}\|\rho_{\tau_i}\|_2^2\le\psi.
\end{equation}
The terms \(\xi_{\tau_i}\) and \(\rho_{\tau_i}\) are zero-mean, and the cross terms vanish in expectation.
\end{assumption}

\begin{assumption}[Score and Sampling Regularity]
\label{assump:score}
The effective score, equivalently noise-prediction, estimation error of the denoising network is bounded by \(\epsilon_{\mathrm{score}}^2\). The reverse-time sampler satisfies standard stability and discretization regularity conditions, so the total sampling residual can be decomposed into an estimation-error component and a numerical discretization component.
\end{assumption}

\subsection{Convergence and Excess-Loss Bounds}
\label{sec:main_results}

\begin{lemma}[Inner-Loop Contraction]
\label{lem:adapt}
Under Assumption~\ref{assump:basic}, after \(K_{\mathrm{in}}\) inner-loop gradient steps initialized at \(\theta_i^{(0)}\), the adapted parameter satisfies
\begin{equation}
\label{eq:lemma1}
\left\|\theta_i^{(K_{\mathrm{in}})}-\theta_{\tau_i}^\star\right\|_2^2
\le
r^{K_{\mathrm{in}}}
\left\|\theta_i^{(0)}-\theta_{\tau_i}^\star\right\|_2^2,
\quad
r=(1-\alpha\mu)^2.
\end{equation}
\end{lemma}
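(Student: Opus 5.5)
The plan is to prove a one-step contraction and then iterate it $K_{\mathrm{in}}$ times. Fix a task $\tau_i$ and write $e_k := \theta_i^{(k)}-\theta_{\tau_i}^\star$. Since $\theta_{\tau_i}^\star$ minimizes ${\cal L}_{\tau_i}$ and lies in the interior of the neighborhood in Assumption~\ref{assump:basic}, we have $\nabla{\cal L}_{\tau_i}(\theta_{\tau_i}^\star)=0$. The inner-loop update then gives
\begin{equation*}
e_{k+1}=e_k-\alpha\bigl(\nabla{\cal L}_{\tau_i}(\theta_i^{(k)})-\nabla{\cal L}_{\tau_i}(\theta_{\tau_i}^\star)\bigr).
\end{equation*}

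By Assumption~\ref{assump:basic}, both iterates $\theta_i^{(k)}$ and $\theta_{\tau_i}^\star$ lie in the neighborhood. I will take this neighborhood to be convex (for instance a ball), so the segment between the two points stays inside it. This lets me apply the integral mean-value theorem for gradients:
\begin{equation*}
\nabla{\cal L}_{\tau_i}(\theta_i^{(k)})-\nabla{\cal L}_{\tau_i}(\theta_{\tau_i}^\star)=H_k e_k,\qquad H_k:=\int_0^1\nabla^2{\cal L}_{\tau_i}\bigl(\theta_{\tau_i}^\star+s e_k\bigr)\,ds .
\end{equation*}
Strong convexity and the smoothness/Hessian bound on the segment give $\mu I\preceq H_k\preceq \min(L,\lambda) I$. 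Hence
\begin{equation*}
e_{k+1}=(I-\alpha H_k)e_k .
\end{equation*}
The matrix $I-\alpha H_k$ is symmetric with eigenvalues in $[1-\alpha L,\,1-\alpha\mu]$. Since $0<\alpha\le 1/L$, the lower end is nonnegative, so $\|I-\alpha H_k\|_2\le 1-\alpha\mu$. Squaring yields the one-step bound
\begin{equation*}
\|e_{k+1}\|_2^2\le(1-\alpha\mu)^2\|e_k\|_2^2 = r\,\|e_k\|_2^2 .
\end{equation*}

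Iterating this bound from $k=0$ to $K_{\mathrm{in}}-1$ gives \eqref{eq:lemma1}. Every iterate stays in the region where the bounds hold, because Assumption~\ref{assump:basic} explicitly places all of $\{\theta_i^{(k)}\}_{k=0}^{K_{\mathrm{in}}}$ in the neighborhood. No separate invariance argument is needed.

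The main obstacle is justifying the Hessian-based linearization, which requires twice differentiability and convexity of the neighborhood. If only $C^1$ regularity is available, I would avoid Hessians altogether and use the standard alternative. Expand
\begin{equation*}
\|e_{k+1}\|^2=\|e_k\|^2-2\alpha\langle g_k,e_k\rangle+\alpha^2\|g_k\|^2,
\end{equation*}
where $g_k$ is the gradient difference. Then bound the inner product using co-coercivity of strongly convex smooth functions:
\begin{equation*}
\langle g_k,e_k\rangle\ge \tfrac{\mu L}{\mu+L}\|e_k\|^2+\tfrac{1}{\mu+L}\|g_k\|^2 .
\end{equation*}
Together with $\|g_k\|\ge\mu\|e_k\|$ and $\alpha\le 1/L$, this yields a factor of at most $(1-\alpha\mu)^2$. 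The last step needs a short case check on the sign of the $\|g_k\|^2$ coefficient, $\alpha^2-\tfrac{2\alpha}{\mu+L}$, which is nonpositive when $\alpha\le 1/L$.
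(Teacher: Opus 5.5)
Your proposal is correct and follows the same route as the paper: you write the error recursion using $\nabla{\cal L}_{\tau_i}(\theta_{\tau_i}^\star)=0$, establish a one-step contraction by the factor $(1-\alpha\mu)^2$, and iterate it over $K_{\mathrm{in}}$ steps. The only difference is that the paper simply cites the standard gradient-descent contraction for smooth, strongly convex losses, whereas you prove it via the averaged Hessian $H_k$ with spectrum in $[\mu, L]$ on a convex neighborhood, which makes the argument self-contained.
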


\begin{proof}
For task \(\tau_i\), the inner-loop update is
\begin{equation}
\theta_i^{(k+1)}
=
\theta_i^{(k)}-
\alpha\nabla {\cal L}_{\tau_i}(\theta_i^{(k)}).
\end{equation}
Since \(\theta_{\tau_i}^\star\) minimizes \({\cal L}_{\tau_i}\), \(\nabla {\cal L}_{\tau_i}(\theta_{\tau_i}^\star)=0\). Subtracting \(\theta_{\tau_i}^\star\) from both sides gives
\begin{align}
\theta_i^{(k+1)}-\theta_{\tau_i}^\star
&=
\theta_i^{(k)}-\theta_{\tau_i}^\star 
\nonumber\\
&\quad-
\alpha\left(
\nabla {\cal L}_{\tau_i}(\theta_i^{(k)})-
\nabla {\cal L}_{\tau_i}(\theta_{\tau_i}^\star)
\right).
\end{align}
Under Assumption~\ref{assump:basic}, standard gradient-descent contraction for an \(L\)-smooth and \(\mu\)-strongly convex loss yields
\begin{equation}
\left\|\theta_i^{(k+1)}-\theta_{\tau_i}^\star\right\|_2^2
\le
(1-\alpha\mu)^2
\left\|\theta_i^{(k)}-\theta_{\tau_i}^\star\right\|_2^2.
\end{equation}
Applying the inequality recursively from \(k=0\) to \(K_{\mathrm{in}}-1\) gives \eqref{eq:lemma1}.
\end{proof}

\begin{theorem}[Conditional Local Excess-Loss Bound]
\label{thm:diffusion_meta}
Under Assumptions~\ref{assump:basic} and~\ref{assump:generator_bound}, define the task-averaged support-set excess loss as
\begin{equation}
\mathcal{E}_{\mathrm{EnvCoLoc}}
:=
\mathbb{E}_{\tau_i,z}
\left[
{\cal L}_{\tau_i}(\theta_i^{(K_{\mathrm{in}})})
-
{\cal L}_{\tau_i}(\theta_{\tau_i}^\star)
\right].
\end{equation}
Then
\begin{equation}
\label{eq:meta_diffusion_bound}
\mathcal{E}_{\mathrm{EnvCoLoc}}
\le
\overline{\mathcal{E}}_{\mathrm{EnvCoLoc}}
:=
\frac{\lambda}{2} r^{K_{\mathrm{in}}}
(\psi_b+\psi_v+\psi).
\end{equation}
\end{theorem}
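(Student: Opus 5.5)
The plan is to chain three elementary bounds: a quadratic upper bound on excess loss from the Hessian condition, the contraction of Lemma~\ref{lem:adapt}, and an expansion of the initialization error via Assumption~\ref{assump:generator_bound}.

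\textbf{Step 1: local quadratic upper bound.} By Assumption~\ref{assump:basic}, the segment between $\theta_i^{(K_{\mathrm{in}})}$ and $\theta_{\tau_i}^\star$ lies in the local neighborhood, which contains both points; I will treat this neighborhood as convex, or argue that it can be taken to be a ball around $\theta_{\tau_i}^\star$. On this segment $\nabla^2\mathcal{L}_{\tau_i}\preceq\lambda I$. Take a second-order Taylor expansion around $\theta_{\tau_i}^\star$ with integral remainder and use $\nabla\mathcal{L}_{\tau_i}(\theta_{\tau_i}^\star)=0$. This gives
\begin{equation*}
\mathcal{L}_{\tau_i}(\theta_i^{(K_{\mathrm{in}})})-\mathcal{L}_{\tau_i}(\theta_{\tau_i}^\star)\le\frac{\lambda}{2}\bigl\|\theta_i^{(K_{\mathrm{in}})}-\theta_{\tau_i}^\star\bigr\|_2^2 .
\end{equation*}

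\textbf{Step 2: contraction.} Lemma~\ref{lem:adapt} bounds the right-hand side by $\frac{\lambda}{2}r^{K_{\mathrm{in}}}\|\theta_i^{(0)}-\theta_{\tau_i}^\star\|_2^2$. This holds pathwise, for every task and every latent sample $z$, so taking $\mathbb{E}_{\tau_i,z}$ preserves the inequality. The factor $\frac{\lambda}{2}r^{K_{\mathrm{in}}}$ is deterministic, so it comes out of the expectation.

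\textbf{Step 3: decompose the initialization error.} Since $\theta_i^{(0)}-\theta_{\tau_i}^\star=g_\phi(z_{\tau_i},\mathbf{Y}_{\tau_i})-\Delta_{\tau_i}^\star$, Assumption~\ref{assump:generator_bound} gives
\begin{equation*}
\theta_i^{(0)}-\theta_{\tau_i}^\star=(m_{\tau_i}-\Delta_{\tau_i}^\star)+\xi_{\tau_i}+\rho_{\tau_i}.
\end{equation*}
Expand the squared norm into three squared terms and three cross terms. The assumption states that the cross terms vanish in expectation. For the cross terms involving $m_{\tau_i}-\Delta_{\tau_i}^\star$, this also follows from $\xi_{\tau_i}$ being zero-mean conditional on $\mathbf{Y}_{\tau_i}$ (which holds by the definition of $m_{\tau_i}$ as a conditional mean), as long as $\Delta^\star_{\tau_i}$ is determined by the task. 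The remaining squared terms are bounded by $\psi_b$, $\psi_v$, and $\psi$, which yields $\mathbb{E}\|\theta_i^{(0)}-\theta_{\tau_i}^\star\|_2^2\le\psi_b+\psi_v+\psi$. Combining this with Steps 1 and 2 gives \eqref{eq:meta_diffusion_bound}.

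\textbf{Main obstacle.} The delicate point is Step 1 and its interaction with randomness. The quadratic bound needs the whole segment to lie in the region where the Hessian bound holds. More importantly, Assumption~\ref{assump:basic} has to hold for every realization of $z$, not just on average, so that the pathwise inequalities can be integrated. I would state explicitly that the neighborhood condition is imposed almost surely over the generator randomness. I would also note that the cross-term cancellation is taken directly from Assumption~\ref{assump:generator_bound} rather than derived; without it, one would only get the weaker bound $3(\psi_b+\psi_v+\psi)$ from $\|a+b+c\|^2\le 3(\|a\|^2+\|b\|^2+\|c\|^2)$.
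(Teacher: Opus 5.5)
Your proposal is correct and follows the paper's proof essentially step for step: the quadratic bound from $\nabla^2\mathcal{L}_{\tau_i}\preceq\lambda I$ and first-order optimality, the contraction of Lemma~\ref{lem:adapt}, and the decomposition $(m_{\tau_i}-\Delta_{\tau_i}^\star)+\xi_{\tau_i}+\rho_{\tau_i}$ with the cross terms removed by Assumption~\ref{assump:generator_bound}. Your added conditions (the segment lies inside the neighborhood, and the regularity holds almost surely in $z$) make explicit points the paper leaves implicit.

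One caveat concerns your side remark. The definition of $m_{\tau_i}$ only gives $\mathbb{E}[\xi_{\tau_i}+\rho_{\tau_i}\mid\mathbf{Y}_{\tau_i}]=0$, not conditional zero-mean of $\xi_{\tau_i}$ alone. Moreover, since $\Delta_{\tau_i}^\star$ need not be a function of $\mathbf{Y}_{\tau_i}$, that argument would really need zero mean conditional on the task, for example via independence of $z$ from $\tau_i$. Taking the cross-term cancellation directly from the assumption, as you ultimately do, is the right choice.
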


\begin{proof}
By Taylor's theorem and the optimality condition \(\nabla {\cal L}_{\tau_i}(\theta_{\tau_i}^\star)=0\), the local Hessian upper bound in Assumption~\ref{assump:basic} implies
\begin{equation}
\label{eq:taylor_excess_bound}
{\cal L}_{\tau_i}(\theta_i^{(K_{\mathrm{in}})})
-
{\cal L}_{\tau_i}(\theta_{\tau_i}^\star)
\le
\frac{\lambda}{2}
\left\|\theta_i^{(K_{\mathrm{in}})}-\theta_{\tau_i}^\star\right\|_2^2.
\end{equation}
Using Lemma~\ref{lem:adapt},
\begin{equation}
\label{eq:inner_contraction_bound}
\left\|\theta_i^{(K_{\mathrm{in}})}-\theta_{\tau_i}^\star\right\|_2^2
\le
r^{K_{\mathrm{in}}}
\left\|\theta_i^{(0)}-\theta_{\tau_i}^\star\right\|_2^2.
\end{equation}
Combining \eqref{eq:taylor_excess_bound} and \eqref{eq:inner_contraction_bound} gives
\begin{equation}
\label{eq:excess_risk_init_error}
{\cal L}_{\tau_i}(\theta_i^{(K_{\mathrm{in}})})
-
{\cal L}_{\tau_i}(\theta_{\tau_i}^\star)
\le
\frac{\lambda}{2}r^{K_{\mathrm{in}}}
\left\|\theta_i^{(0)}-\theta_{\tau_i}^\star\right\|_2^2.
\end{equation}
Because \(\theta_i^{(0)}=\theta^\star+g_{\phi}(z_{\tau_i},\mathbf{Y}_{\tau_i})\),
\begin{equation}
\theta_i^{(0)}-\theta_{\tau_i}^\star
=
\left(m_{\tau_i}-\Delta_{\tau_i}^\star\right)
+\xi_{\tau_i}+\rho_{\tau_i}.
\end{equation}
Taking the squared norm and expectation, the zero-mean and orthogonality conditions in Assumption~\ref{assump:generator_bound} yield
\begin{equation}
\mathbb{E}
\left\|\theta_i^{(0)}-\theta_{\tau_i}^\star\right\|_2^2
\le
\psi_b+\psi_v+\psi.
\end{equation}
Taking expectations in \eqref{eq:excess_risk_init_error} proves \eqref{eq:meta_diffusion_bound}.
\end{proof}

Theorem~\ref{thm:diffusion_meta} gives a conditional local support-loss bound: for a fixed inner-loop budget, adaptation depends on the contraction factor \(r^{K_{\mathrm{in}}}\) and the initialization error \(\psi_b+\psi_v+\psi\). Hence, reducing the environment-conditioned initialization error leads to faster and lower-loss finite-step adaptation.

\begin{corollary}[Upper-Bound Gain from Environmental Conditioning]
\label{cor:env_gain}
Define $R(\mathbf{Y})
:=
\mathbb{E}_{\tau_i}
\left[
\left\|
\mathbb{E}[\Delta_{\tau_i}^\star|\mathbf{Y}_{\tau_i}]
-
\mathbb{E}_{\tau_i}[\Delta_{\tau_i}^\star]
\right\|_2^2
\right]$
as the offset variance explained by the environmental descriptor. Consider an environment-agnostic shared-initialization baseline whose deterministic offset is independent of \(\mathbf{Y}_{\tau_i}\). If the deterministic component of EnvCoLoc is the conditional-mean predictor,
\begin{equation}
\label{eq:conditional_mean_predictor}
m_{\tau_i}=\mathbb{E}[\Delta_{\tau_i}^\star|\mathbf{Y}_{\tau_i}],
\end{equation}
then the difference between the baseline upper bound and the EnvCoLoc upper bound is
\begin{equation}
\label{eq:gain_bound}
\overline{\mathcal{E}}_{\mathrm{SI}}
-
\overline{\mathcal{E}}_{\mathrm{EnvCoLoc}}
=
\frac{\lambda}{2}r^{K_{\mathrm{in}}}
\left[R(\mathbf{Y})-\psi_v-\psi\right].
\end{equation}
Thus EnvCoLoc admits a tighter expected support-loss upper bound than the shared-initialization baseline whenever $R(\mathbf{Y})>\psi_v+\psi$.
\end{corollary}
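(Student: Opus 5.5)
The plan is to apply Theorem~\ref{thm:diffusion_meta} twice, once for EnvCoLoc and once for the shared-initialization baseline, and then compare the two bias constants $\psi_b$. Both methods use the same inner loop, so the prefactor $\frac{\lambda}{2}r^{K_{\mathrm{in}}}$ is identical and cancels in the difference. The baseline's offset is deterministic and independent of $\mathbf{Y}_{\tau_i}$, so it has $\xi_{\tau_i}=\rho_{\tau_i}=0$. We take its best constant offset $m_{\mathrm{SI}}=\mathbb{E}_{\tau_i}[\Delta_{\tau_i}^\star]$, the minimizer of $\mathbb{E}\|c-\Delta_{\tau_i}^\star\|_2^2$ over constants $c$; this is the natural reading of the shared-initialization baseline, since $\theta^\star$ can absorb any constant shift. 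The theorem then gives
\begin{equation*}
\overline{\mathcal{E}}_{\mathrm{SI}}=\frac{\lambda}{2}r^{K_{\mathrm{in}}}\,\mathbb{E}\|\Delta_{\tau_i}^\star-\mathbb{E}[\Delta_{\tau_i}^\star]\|_2^2 .
\end{equation*}
Here we use the bias term with equality, i.e.\ we identify $\psi_b$ with the exact expected squared bias. This is needed because the statement asserts an identity between the two bounds.

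The key step is the law of total variance. Write
\begin{equation*}
\Delta_{\tau_i}^\star-\mathbb{E}[\Delta_{\tau_i}^\star]=\bigl(\Delta_{\tau_i}^\star-\mathbb{E}[\Delta_{\tau_i}^\star|\mathbf{Y}_{\tau_i}]\bigr)+\bigl(\mathbb{E}[\Delta_{\tau_i}^\star|\mathbf{Y}_{\tau_i}]-\mathbb{E}[\Delta_{\tau_i}^\star]\bigr).
\end{equation*}
Expand the squared norm and condition on $\mathbf{Y}_{\tau_i}$. The cross term vanishes because the first bracket has zero conditional mean and the second is $\mathbf{Y}$-measurable. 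This gives
\begin{equation*}
\mathbb{E}\|\Delta^\star-\mathbb{E}\Delta^\star\|_2^2=\mathbb{E}\|\Delta^\star-\mathbb{E}[\Delta^\star|\mathbf{Y}]\|_2^2+R(\mathbf{Y}).
\end{equation*}
Under the hypothesis \eqref{eq:conditional_mean_predictor}, the first term on the right is exactly EnvCoLoc's bias $\psi_b$. Hence $\overline{\mathcal{E}}_{\mathrm{SI}}=\frac{\lambda}{2}r^{K_{\mathrm{in}}}(\psi_b+R(\mathbf{Y}))$.

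Subtracting $\overline{\mathcal{E}}_{\mathrm{EnvCoLoc}}=\frac{\lambda}{2}r^{K_{\mathrm{in}}}(\psi_b+\psi_v+\psi)$ cancels $\psi_b$ and yields \eqref{eq:gain_bound}. Since $\frac{\lambda}{2}r^{K_{\mathrm{in}}}>0$ (as $0<\alpha\mu\le\alpha L\le1$, and $r>0$ whenever $\alpha\mu<1$), the sign condition $R(\mathbf{Y})>\psi_v+\psi$ follows immediately.

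The main obstacle is interpretive rather than computational. The theorem only provides upper bounds with constants $\psi_b,\psi_v,\psi$, so the exact equality \eqref{eq:gain_bound} requires a convention. I would state explicitly that the bound constants are taken to be the exact second moments for both methods; they are then tight in the decomposition step, because the cross terms vanish by Assumption~\ref{assump:generator_bound}. I would also state that the baseline uses the optimal constant offset. If the baseline uses some other constant $c$, its bias only grows, by $\|c-\mathbb{E}\Delta^\star\|_2^2$, so the identity becomes "$\ge$" and the tightness conclusion still holds.
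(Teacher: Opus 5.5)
Your proposal is correct and follows the paper's proof step for step: the baseline uses the best constant offset $\mathbb{E}_{\tau_i}[\Delta_{\tau_i}^\star]$, the law of total variance splits its error into $\psi_b+R(\mathbf{Y})$ under the conditional-mean hypothesis, and the two bounds are subtracted. Your remark that the stated equality requires taking the bound constants as exact second moments, with the optimal constant for the baseline, makes explicit a convention the paper uses without stating it.
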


\begin{proof}
For an environment-agnostic initializer independent of \(\mathbf{Y}_{\tau_i}\), the best constant offset predictor is $c^\star=\mathbb{E}_{\tau_i}[\Delta_{\tau_i}^\star]$.
Its expected initialization error is $\mathbb{E}_{\tau_i}
\left[
\left\|
\Delta_{\tau_i}^\star-
\mathbb{E}_{\tau_i}[\Delta_{\tau_i}^\star]
\right\|_2^2
\right]$.
By the law of total variance,
\begin{align}
&\mathbb{E}_{\tau_i}
\left[
\left\|
\Delta_{\tau_i}^\star-
\mathbb{E}_{\tau_i}[\Delta_{\tau_i}^\star]
\right\|_2^2
\right]
\nonumber\\
&=
\mathbb{E}_{\tau_i}
\left[
\left\|
\Delta_{\tau_i}^\star-
\mathbb{E}[\Delta_{\tau_i}^\star|\mathbf{Y}_{\tau_i}]
\right\|_2^2
\right]
+
R(\mathbf{Y}).
\label{eq:total_variance_decomp}
\end{align}
Under \eqref{eq:conditional_mean_predictor}, $\psi_b
=
\mathbb{E}_{\tau_i}
\left[
\left\|
\Delta_{\tau_i}^\star-
\mathbb{E}[\Delta_{\tau_i}^\star|\mathbf{Y}_{\tau_i}]
\right\|_2^2
\right]$.
Therefore the environment-agnostic baseline has upper bound $\overline{\mathcal{E}}_{\mathrm{SI}}
=
\frac{\lambda}{2}r^{K_{\mathrm{in}}}
(\psi_b+R(\mathbf{Y}))$,
whereas Theorem~\ref{thm:diffusion_meta} gives $\overline{\mathcal{E}}_{\mathrm{EnvCoLoc}}
=
\frac{\lambda}{2}r^{K_{\mathrm{in}}}
(\psi_b+\psi_v+\psi)$.
Subtracting the two bounds gives \eqref{eq:gain_bound}.
\end{proof}

Corollary~\ref{cor:env_gain} compares expected upper bounds rather than guaranteeing that every individual task improves. It states that environmental descriptors are beneficial when the offset variance explained by \(\mathbf{Y}\) is larger than the additional stochastic and generation errors introduced by the generator.

This result further motivates a comparison between stochastic and deterministic environment-conditioned offset generation. We use NoDiff as the deterministic counterpart, which keeps the environmental descriptor but replaces the diffusion generator with a deterministic MLP. Let \(h_{\eta}(\mathbf{Y}_{\tau_i})\) denote this deterministic offset mapper and define its approximation error as $\psi_{\mathrm{det}}
:=
\mathbb{E}
\left\|
h_{\eta}(\mathbf{Y}_{\tau_i})-\Delta_{\tau_i}^\star
\right\|_2^2$.
Applying the same contraction argument as in Theorem~\ref{thm:diffusion_meta}, the NoDiff initialization error is \(h_{\eta}(\mathbf{Y}_{\tau_i})-\Delta_{\tau_i}^\star\), whose expected squared norm is \(\psi_{\mathrm{det}}\). Therefore, $\mathcal{E}_{\mathrm{NoDiff}}
\le
\frac{\lambda}{2}r^{K_{\mathrm{in}}}\psi_{\mathrm{det}}$.
Comparing this bound with Theorem~\ref{thm:diffusion_meta}, EnvCoLoc has a tighter local bound when
\begin{equation}
\label{eq:diff_better_than_det}
\psi_{\mathrm{det}}>\psi_b+\psi_v+\psi.
\end{equation}
Thus, Eq.~\eqref{eq:diff_better_than_det} gives the condition under which diffusion-based generation can improve the bound. This condition does not imply that stochastic generation is universally superior. Rather, it characterizes when the stochastic generator yields a tighter local bound than a deterministic offset mapper.

\begin{remark}[Interpretation of the Diffusion Residual]
\label{rem:diffusion_psi}
Under Assumption~\ref{assump:score}, the residual term \(\psi\) can be interpreted as the part of the initialization error caused by imperfect diffusion generation. It contains two sources: the denoising network may not exactly match the target score, and the reverse process is implemented with a finite number of numerical steps. This can be summarized as
\begin{equation}
\label{eq:psi_decomp}
\psi
\le
C_1\epsilon_{\mathrm{score}}^2
+
C_2\Delta t^2,
\end{equation}
where \(C_1,C_2>0\) are problem-dependent constants and \(\Delta t\) is the reverse-time discretization step. A more accurate denoising model or a finer reverse-time discretization reduces this residual. This relation is used as an interpretive decomposition of the generation error, consistent with standard score-based generative modeling analyses.

Under the conditional-mean case in Corollary~\ref{cor:env_gain}, the deterministic bias term becomes
\begin{equation}
\label{eq:conditional_variance}
\psi_b
=
\sigma_{\Delta|\mathbf{Y}}^2
:=
\mathbb{E}_{\tau_i}
\left[
\left\|
\Delta_{\tau_i}^\star-
\mathbb{E}[\Delta_{\tau_i}^\star|\mathbf{Y}_{\tau_i}]
\right\|_2^2
\right].
\end{equation}
Substituting \eqref{eq:psi_decomp} into Theorem~\ref{thm:diffusion_meta} yields
\begin{equation}
\label{eq:full_bound}
\begin{aligned}
\mathcal{E}_{\mathrm{EnvCoLoc}}
&\le
\frac{\lambda}{2}r^{K_{\mathrm{in}}}
\left(
\sigma_{\Delta|\mathbf{Y}}^2
+
\psi_v
+
C_1\epsilon_{\mathrm{score}}^2
+
C_2\Delta t^2
\right).
\end{aligned}
\end{equation}
Equation~\eqref{eq:full_bound} separates three factors: environment-conditioned residual variation, stochastic sampling variation, and diffusion-generation error. The inner-loop adaptation contracts these terms by \(r^{K_{\mathrm{in}}}\).
\end{remark}

\section{Experimental Results}
\label{sec:ER}

\subsection{Experimental Setup}
\label{subsec:es}
This section presents the data collection platform, experimental scenarios, and baseline models used for comparison. The datasets are collected from two real-world indoor environments through site surveys conducted on the campus of The Chinese University of Hong Kong, Shenzhen.

\begin{figure*}
	\centering
	\subfigure[]{
		\includegraphics[scale=0.08]{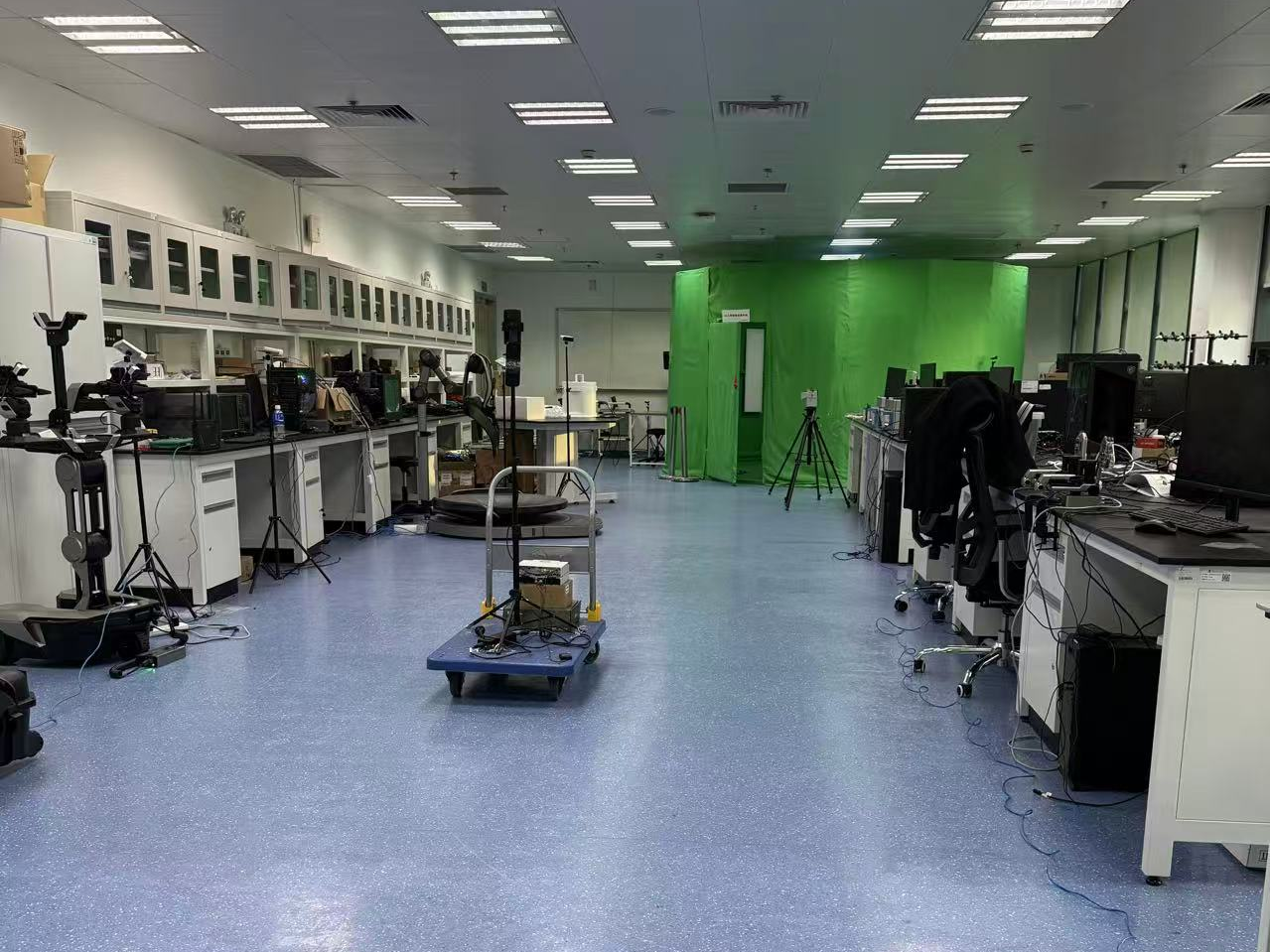}}
		\subfigure[]{
		\includegraphics[scale=0.16]{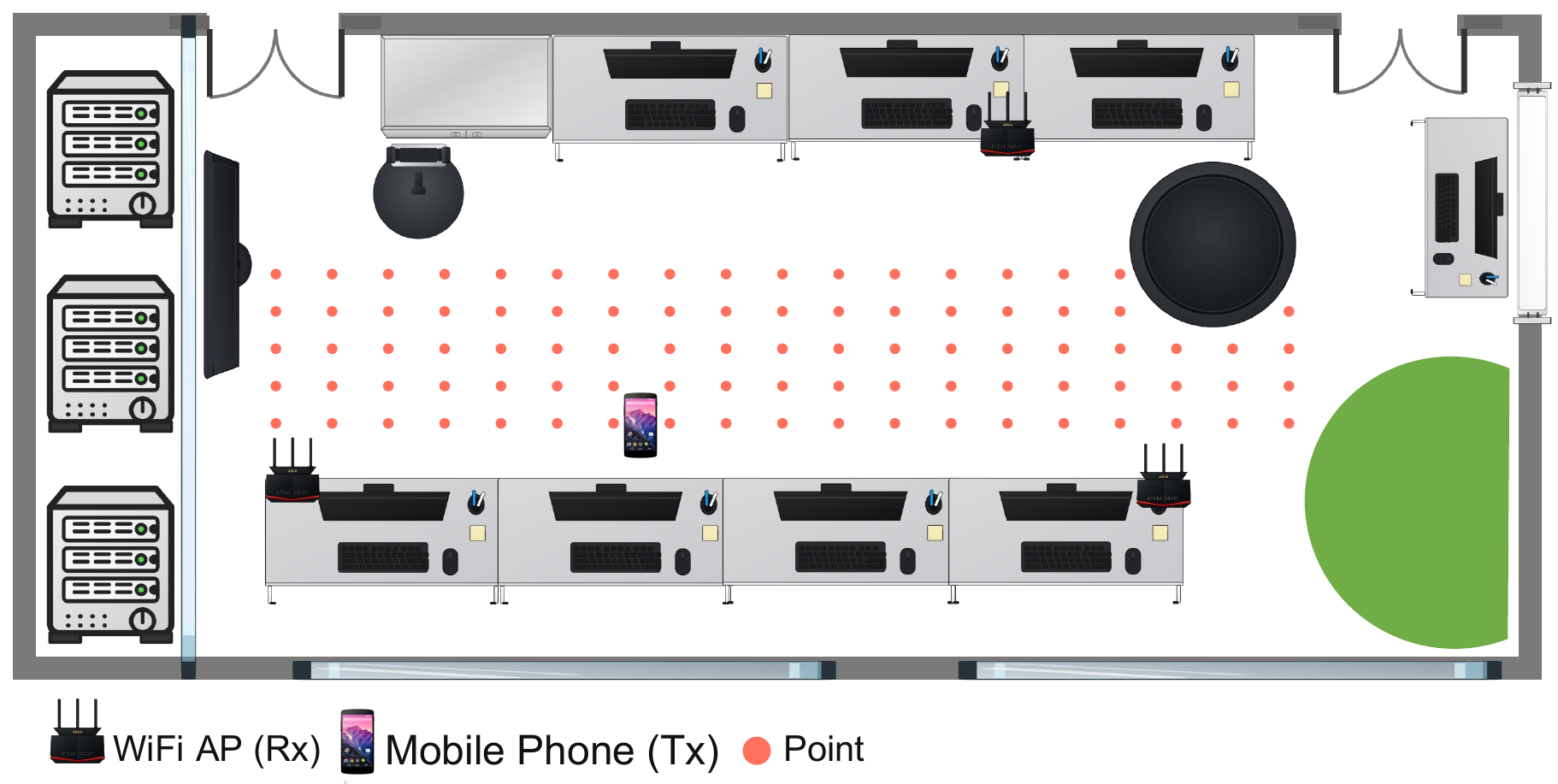}}
		\subfigure[]{
		\includegraphics[scale=0.06]{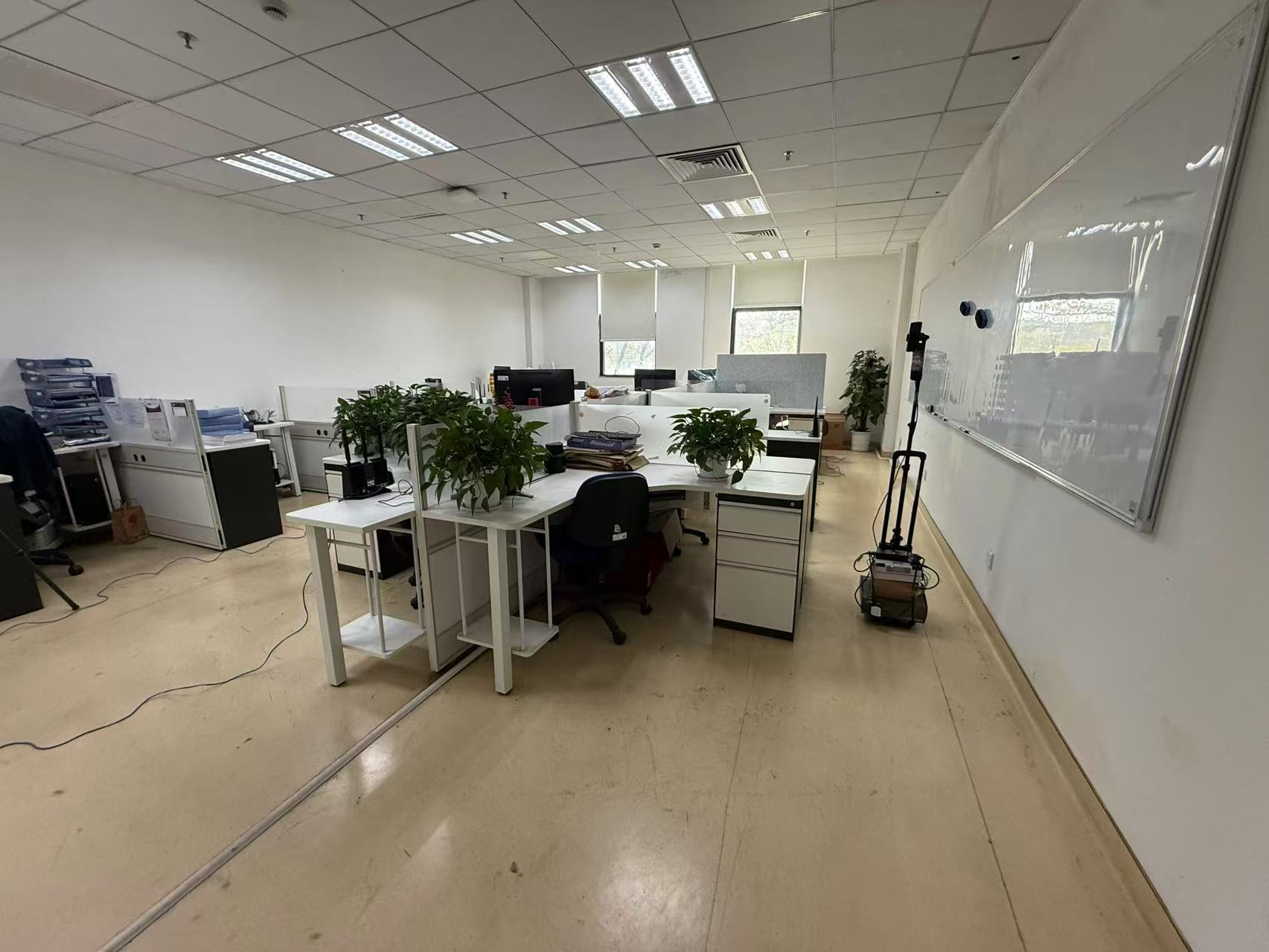}}
		\subfigure[]{
		\includegraphics[scale=0.165]{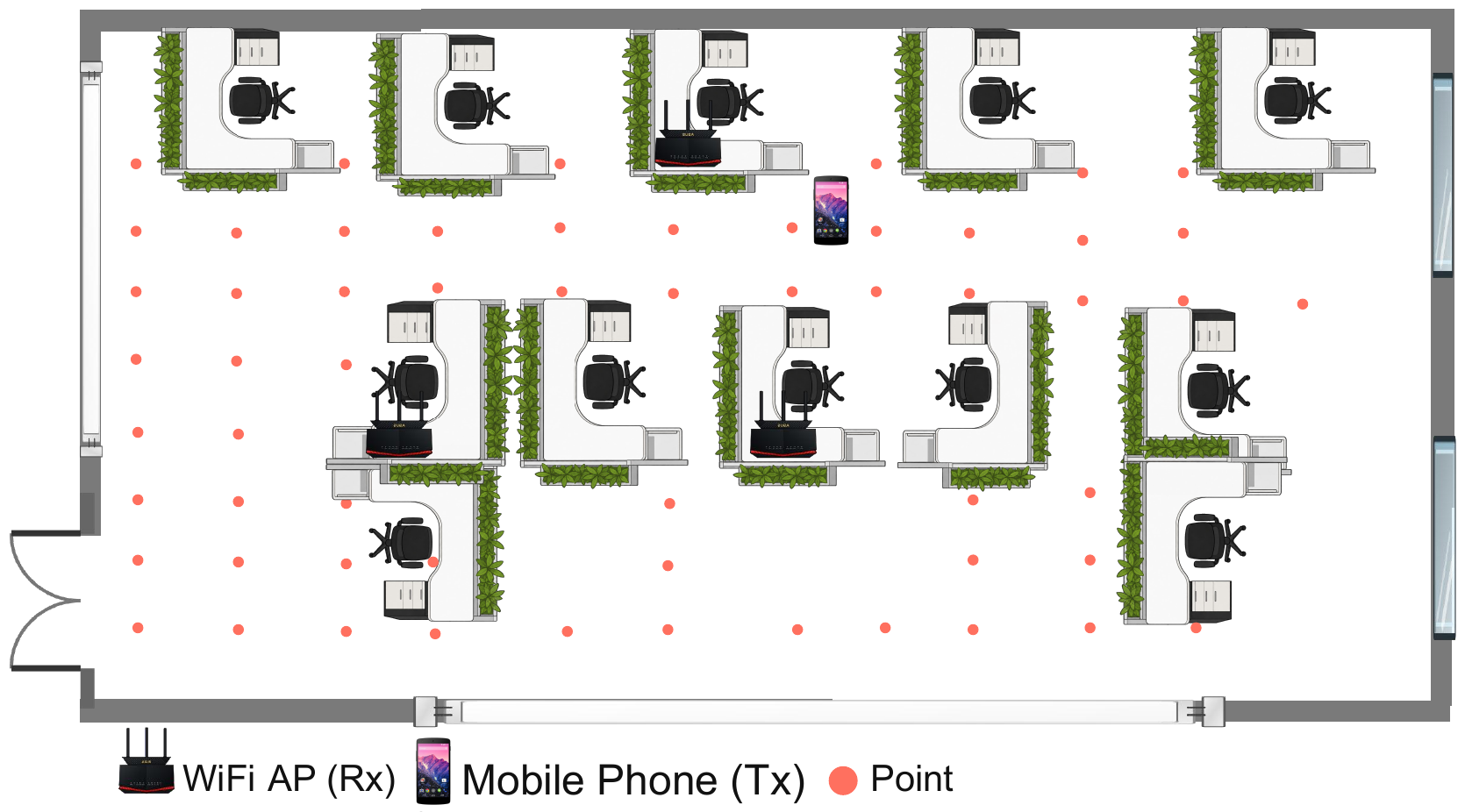}
	}
		\caption{Photographs and layouts of the two experimental scenarios. The lab is denoted as S1 and is dominated by LOS links, while the office is denoted as S2 and is dominated by NLOS links. The photographs in (a) and (c) correspond to deploy1, where no additional obstacles are introduced. (a) Photograph of S1. (b) Layout of S1. (c) Photograph of S2. (d) Layout of S2.}
	\label{fig:scenarios}
\end{figure*}

\subsubsection{Data Collection Platform}
We built an uplink communication platform consisting of $Z=3$ WiFi APs connected to a central server via a network switch. Each AP is assigned a unique IP address, enabling independent and simultaneous data reception from the mobile phone. Data collection is performed using the Nexmon CSI Extractor Tool~\cite{nexmon:project}, an open-source framework for per-frame CSI extraction. It supports up to four spatial streams and four receiving chains on Broadcom and Cypress WiFi chips, with bandwidths up to 80 MHz in both the 2.4 GHz and 5 GHz bands, and is compatible with devices ranging from Raspberry Pi platforms to smartphones and commercial WiFi APs~\cite{gringoli2019free}. 

In the implementation, a Nexus 5 smartphone is used as the transmitter, while ASUS RT-AC86U routers act as the receivers. The Nexmon framework~\cite{nexmon:project} is used to patch a modified firmware~\cite{gringoli2019free}, so that each collected packet contains both \ac{rssi} and CSI measurements. During CSI collection, the WiFi radio remains active for packet transmission, while the control/backhaul connection between the APs and the server is maintained through Ethernet. The system operates under the 802.11ac standard in the 5 GHz band on channel 157 at 5785 MHz with an 80 MHz bandwidth.

\subsubsection{Experimental Scenarios}
As illustrated in Fig.~\ref{fig:scenarios}, the experiments are conducted in two representative indoor propagation scenarios: a lab scenario (S1) dominated by \ac{los} links and an office scenario (S2) dominated by \ac{nlos} links. The lab covers an area of 12.4~m $\times$ 8.6~m, while the office covers an area of 8.6~m $\times$ 6.2~m and contains desks, walls, and other obstacles that create richer multipath propagation. In both scenarios, a Nexus 5 smartphone is used as the mobile transmitter at a height of approximately 1.4~m, and three WiFi APs are used as receivers at a height of approximately 1~m. Survey points are sampled with a spacing of 0.4~m in the lab and 0.6~m in the office, resulting in 90 survey locations in the lab and 60 survey locations in the office. The number of valid CSI packets varies slightly across AP links and deployment rounds, with about 900--1000 packets recorded for each survey point from each AP. Across all deployment rounds, the processed calibrated CSI dataset contains 766,690 AP-link packets in the lab and 538,929 AP-link packets in the office. In each scenario, data collection is performed across three deployment
rounds with different obstacle configurations. Deploy1 corresponds to
the original scenario photographs in Fig.~\ref{fig:scenarios}, where no
additional obstacles are introduced. In deploy2, cardboard boxes are
added to change the obstacle distribution. In deploy3, additional
obstacles are introduced and metal surfaces are attached to selected
cardboard boxes, which changes both the spatial distribution and the
material composition of propagation-relevant obstacles. Fig.~\ref{fig:obstacle-variation}
shows the added obstacle configurations in deploy2 and deploy3. For
each deployment round, the point-cloud map is updated and the
corresponding material-aware descriptors are recomputed.

\begin{figure}[t]
	\centering
	\subfigure[Lab deploy2]{\includegraphics[width=0.48\columnwidth]{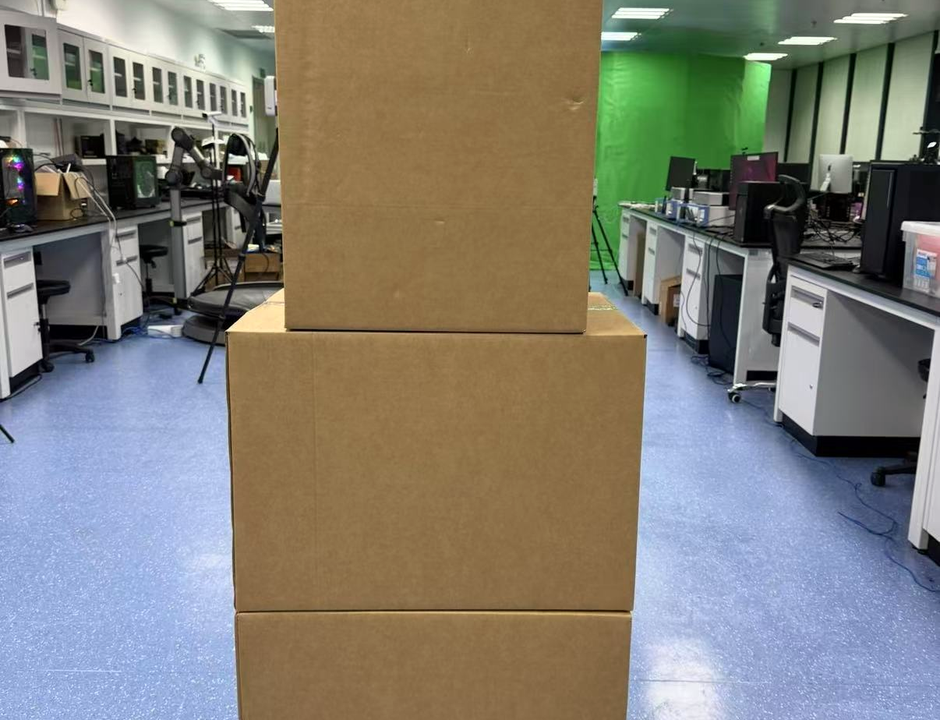}}
	\subfigure[Lab deploy3]{\includegraphics[width=0.48\columnwidth]{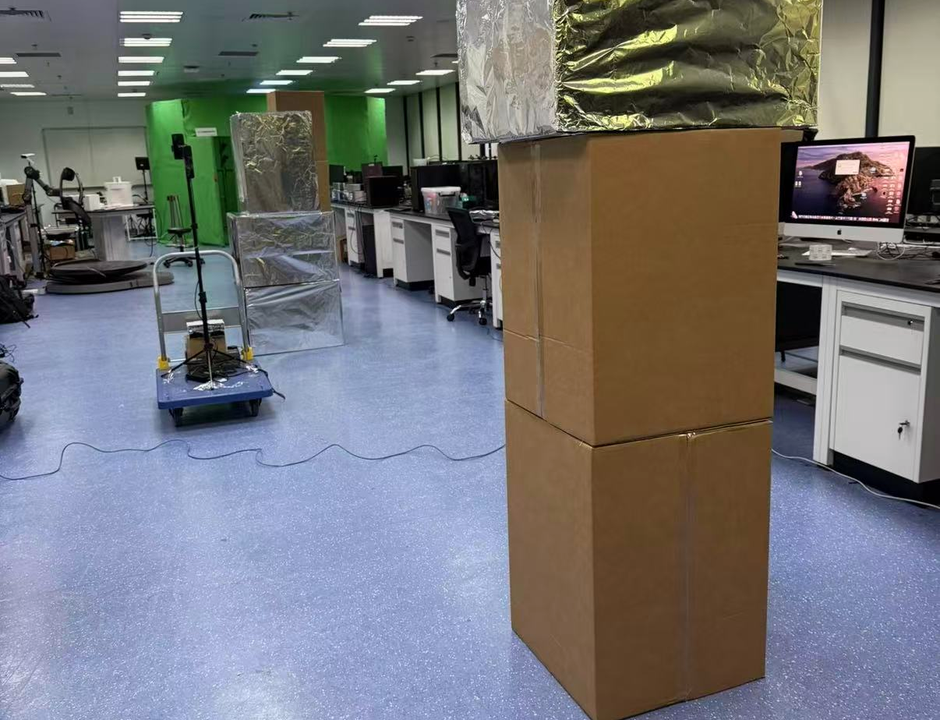}}
	\subfigure[Office deploy2]{\includegraphics[width=0.48\columnwidth]{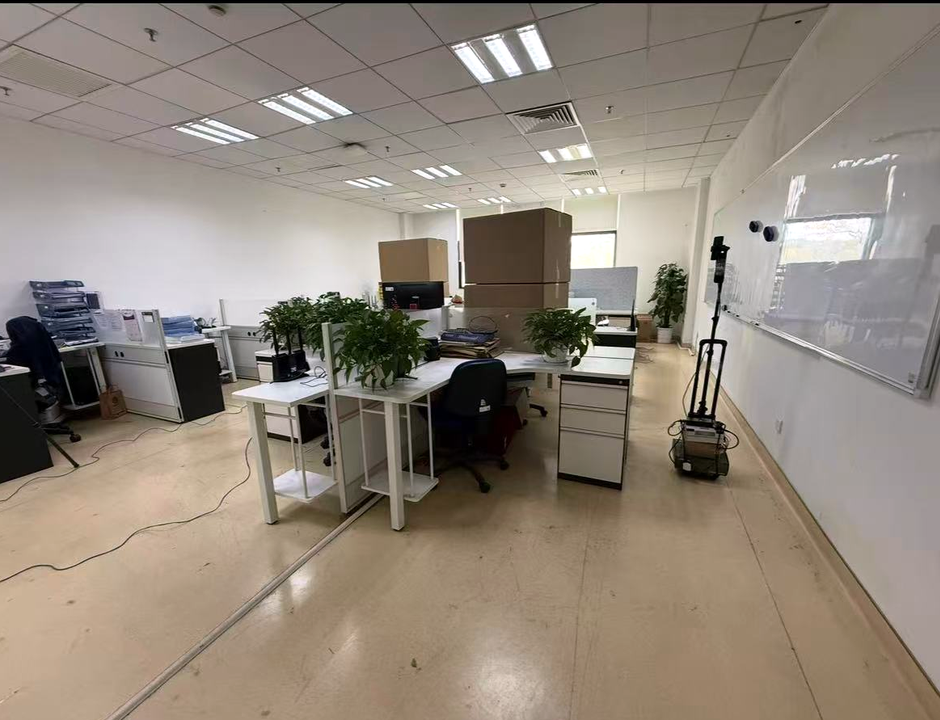}}
	\subfigure[Office deploy3]{\includegraphics[width=0.48\columnwidth]{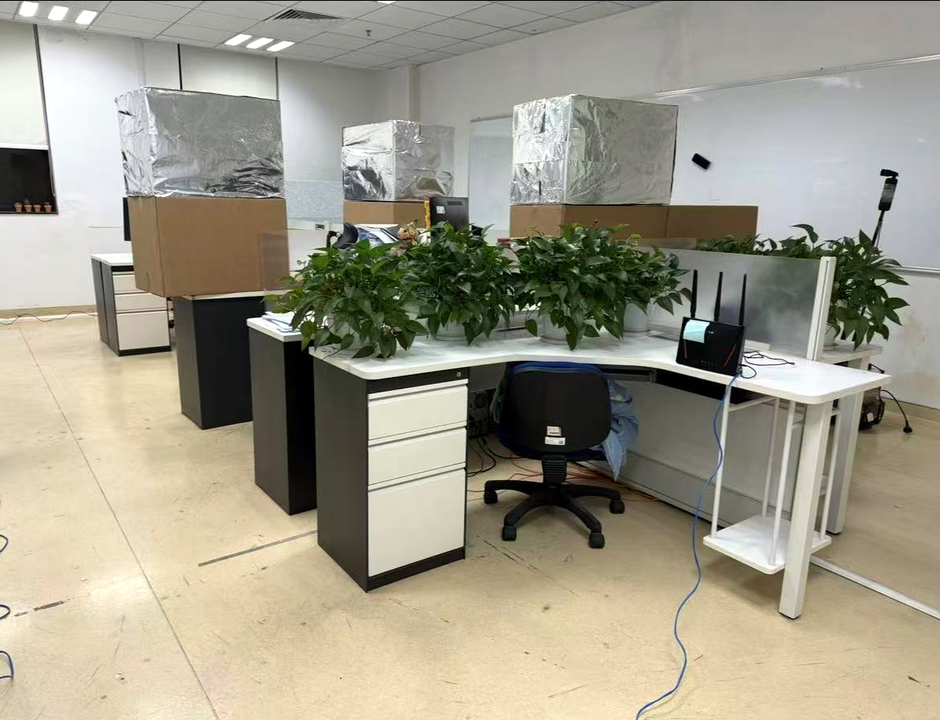}}
	\caption{Added obstacle configurations in deploy2 and deploy3. Deploy1 corresponds to the original scenario photographs in Fig.~\ref{fig:scenarios}, where no additional obstacles are introduced.}
	\label{fig:obstacle-variation}
\end{figure}

Each meta-learning task $\tau_i$ corresponds to a randomly sampled
sub-region within a given scenario and is treated as a local environment.
In the current evaluation, data are organized by deployment rounds.
For the lab scenario, deploy1 and deploy2 are used as historical
environments for meta-training, while deploy3 is used as the target
environment for fine-tuning and testing. For the office scenario,
deploy1 and deploy3 are used as historical environments, while deploy2
is used as the target environment. Unless otherwise stated, the
support-set size is $N_s = 10$. The support-size study further evaluates
$N_s \in \{1,5,10,30\}$. Query samples are always disjoint from the
support samples; the lab evaluation uses $N_q=20$, and the office
target evaluation uses 40 query samples from deploy2. Meta-training
uses 1200 outer-loop iterations with randomly sampled task-level
sub-regions. The point-cloud maps used to compute the environmental
descriptors were captured using a Leica RTC360 LT terrestrial laser
scanner. Fig.~\ref{fig:point-cloud} presents one representative point-cloud image for each
scenario. We further assign coarse material-category labels using
CloudCompare~\cite{cloudcompare}. The current material-aware
descriptor uses six coarse labels: cement, metal, wood, plastic, tile,
and glass.
\begin{figure}[t]
	\centering
	\includegraphics[width=\columnwidth]{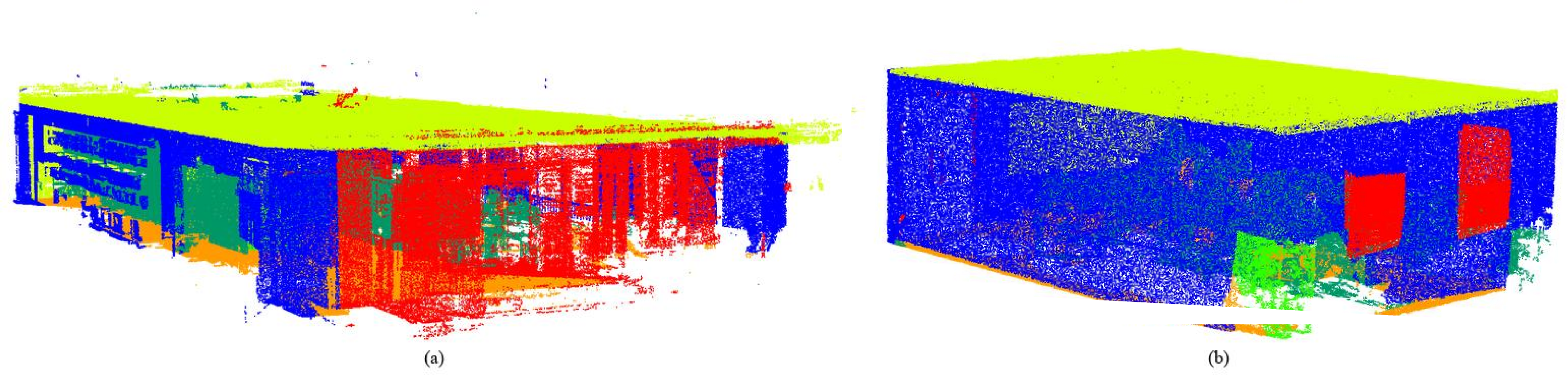}
	\caption{(a) Point cloud representation in the lab. (b) Point cloud representation in the office. Obstacles are manually assigned coarse material-category labels and shown in different colors.}~\label{fig:point-cloud}
\end{figure}

\subsubsection{Baselines and Settings}
The environment-conditioned diffusion generator uses a lightweight U-Net with two downsampling layers of 64 and 128 channels, a cross-attention bottleneck, two upsampling layers of 128 and 64 channels, and a final $1\times 1$ convolution layer. The cross-attention projection dimension is 64, the latent channel width is 64, and the generator contains approximately 0.5M trainable parameters. The reverse diffusion process uses $T_{\mathrm{d}}=20$ denoising steps with a linear noise schedule, where the diffusion noise rates are $\beta_1^{\mathrm{d}}=10^{-4}$ and $\beta_{T_{\mathrm{d}}}^{\mathrm{d}}=0.02$. All experiments are conducted on an NVIDIA A100 GPU with 80 GB memory.

The localization network is a four-layer fully connected model with hidden dimensions $[256,128,64]$ and ReLU activations, mapping concatenated calibrated CSI amplitudes from all APs to 3D coordinates. During meta-training, we set the inner-loop learning rate to $\alpha=0.008$, the outer-loop learning rate to $\beta=0.001$, the number of inner-loop steps to $K_{\mathrm{in}}=5$, and the diffusion denoising steps to $T_{\mathrm{d}}=20$. Consistent with Eq.~\eqref{eq:meta_obj}, the shared localization initialization $\theta$ and the diffusion generator $\phi$ are jointly optimized by the outer-loop meta-loss. 

We compare the proposed EnvCoLoc framework against the following methods:
\begin{itemize}
	\item {Random Initialization (RI)}: We randomly generate a set of network parameters for initializing the new environment. The task format and the hyperparameters of the neural network remain the same as those used in EnvCoLoc.
	\item \ac{knn}~\cite{bahl2000radar}: We adopt the Euclidean distance metric to select the $K_{\rm NN}$ RPs closest to the target fingerprint in the signal space, where $K_{\rm NN}$ is set equal to the support-set size used at the
meta-test stage. The averaged locations of the selected RPs are then treated as the estimation result.
	\item {ConFi~\cite{chen2017confi}}: ConFi is the first method using CNNs to learn CSI fingerprints for indoor localization. We use the same dataset split as the proposed method, where data from historical environments serve as the training set and data collected in the new environment serve as the test set.
	\item {TransLoc~\cite{li2021transloc}}: TransLoc employs transfer learning to map cross-domain CSI features into a shared feature space for localization. The model is pretrained on historical environments and fine-tuned on the new environment with the same support samples.
\end{itemize}

The compared baselines follow their original formulations and do not use environmental descriptors. For fair comparison, all learning-based methods use the same localization-network input, namely the calibrated CSI amplitudes from all APs, $\mathbf{x} \in \mathbb{R}^{Z \times 256}$. In EnvCoLoc, the environmental descriptors are used only to condition the diffusion generator for meta-initialization. The respective effects of environmental conditioning and diffusion-based offset generation are further examined through the NoEnv and NoDiff ablations in Sec.~V-D.

\begin{figure}[t]
	\centering
	\includegraphics[width=0.9\columnwidth]{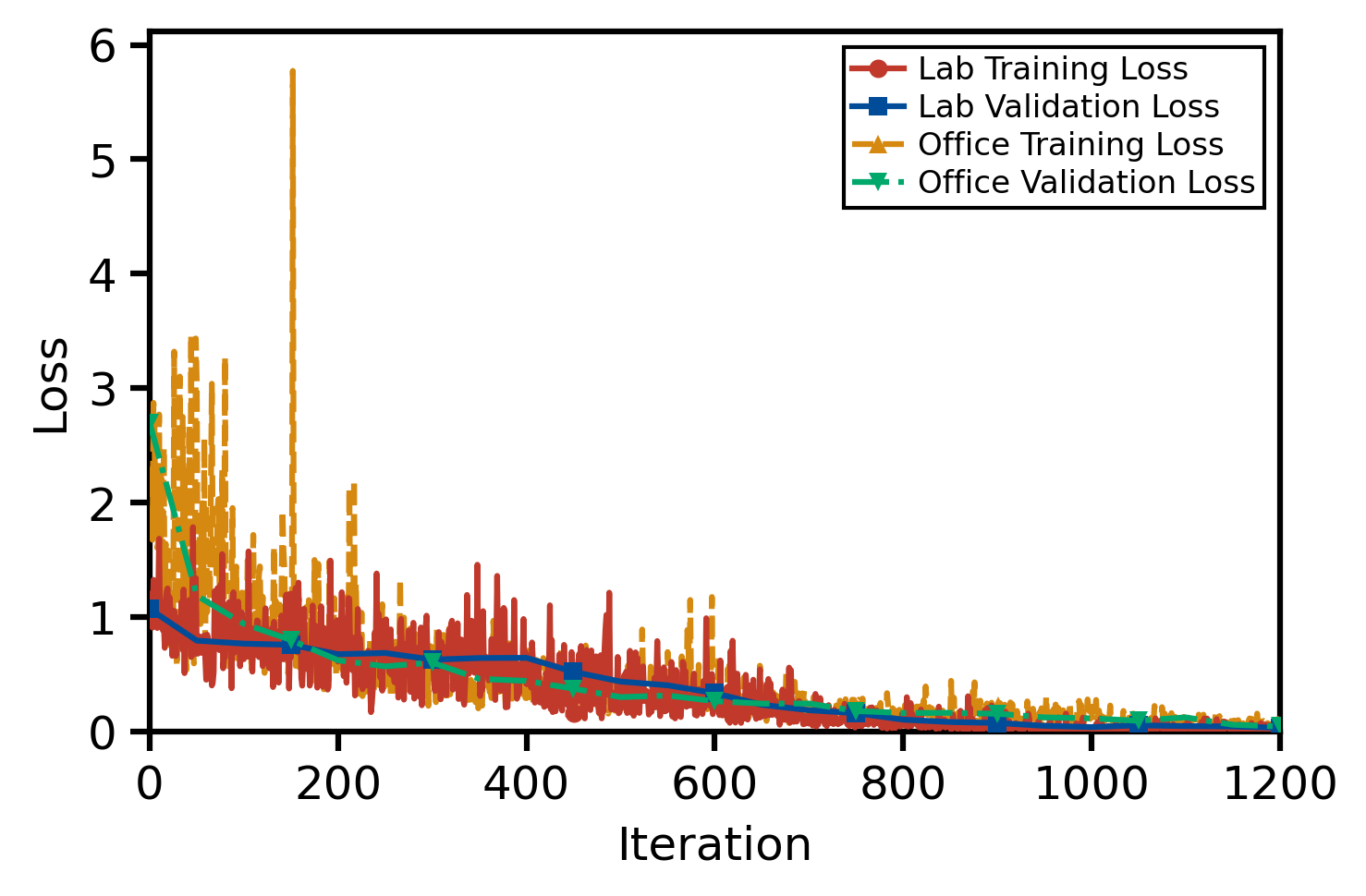}
	\caption{The convergence comparison between training loss and validation loss during the meta-training stage. Specifically, the red and orange curves represent the training loss for the lab and office scenarios, respectively, while the blue and green curves show the validation loss for the lab and office scenarios. All curves converge, indicating effective model training without overfitting.}	\label{fig:meta-train-loss}
\end{figure}

\subsection{Convergence Analysis}
To assess the role of meta-parameter initialization, we first examine the convergence behavior in new environments. Fig.~\ref{fig:meta-train-loss} compares the training and validation loss convergence during the meta-training stage for the lab (S1) and office (S2) scenarios. In both cases, the losses gradually converge, indicating stable joint optimization of the localization initialization and the diffusion generator. Compared with the lab scenario, the office scenario exhibits larger fluctuations during training and converges to a higher loss level. This is consistent with its \ac{nlos}-dominant propagation, where denser obstacles induce stronger multipath variations and increase the environment-conditioned residual variation.

When the environments change, Fig.~\ref{fig:meta-test-loss} (a) and (b) show the meta-test convergence in the lab and office scenarios, respectively, with only 10 support samples in the new environment. The solid red and blue curves denote the fine-tuning loss and test loss obtained from the proposed meta-parameter initialization, whereas the dashed curves correspond to random initialization. In both scenarios, meta-parameter initialization rapidly reduces the fine-tuning loss within the first few update steps and maintains a substantially lower test loss throughout adaptation. In Fig.~\ref{fig:meta-test-loss} (a), random initialization yields a slowly decreasing training loss but its test loss remains high and nearly flat, indicating poor adaptation under sparse support data. In Fig.~\ref{fig:meta-test-loss} (b), random initialization also decreases only gradually and stays well above the meta-initialized curves for both training and test loss. These observations agree with Lemma~\ref{lem:adapt} and Theorem~\ref{thm:diffusion_meta}: when the initialization $\theta_i^{(0)}$ is closer to the task-specific optimum $\theta_{\tau_i}^\star$, the remaining adaptation error after $K_{\mathrm{in}}$ updates is contracted by $r^{K_{\mathrm{in}}}$.

\begin{figure}
	\centering
	\subfigure[]{\includegraphics[width=0.241\textwidth]{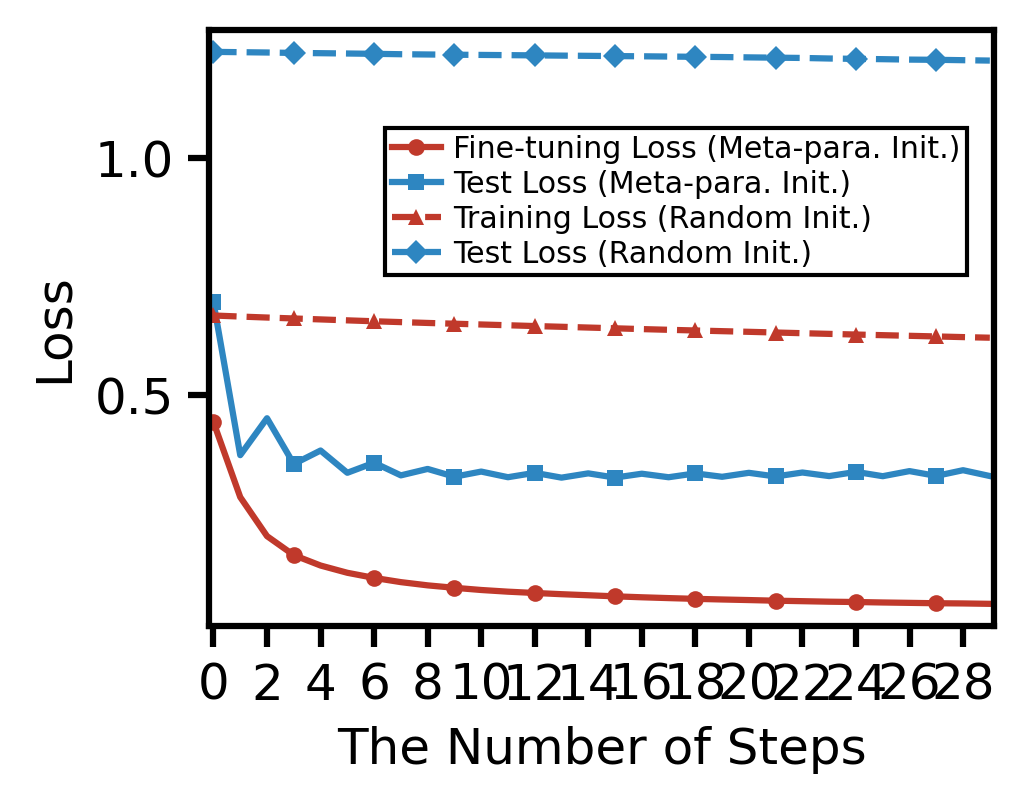}}
	\subfigure[]{\includegraphics[width=0.241\textwidth]{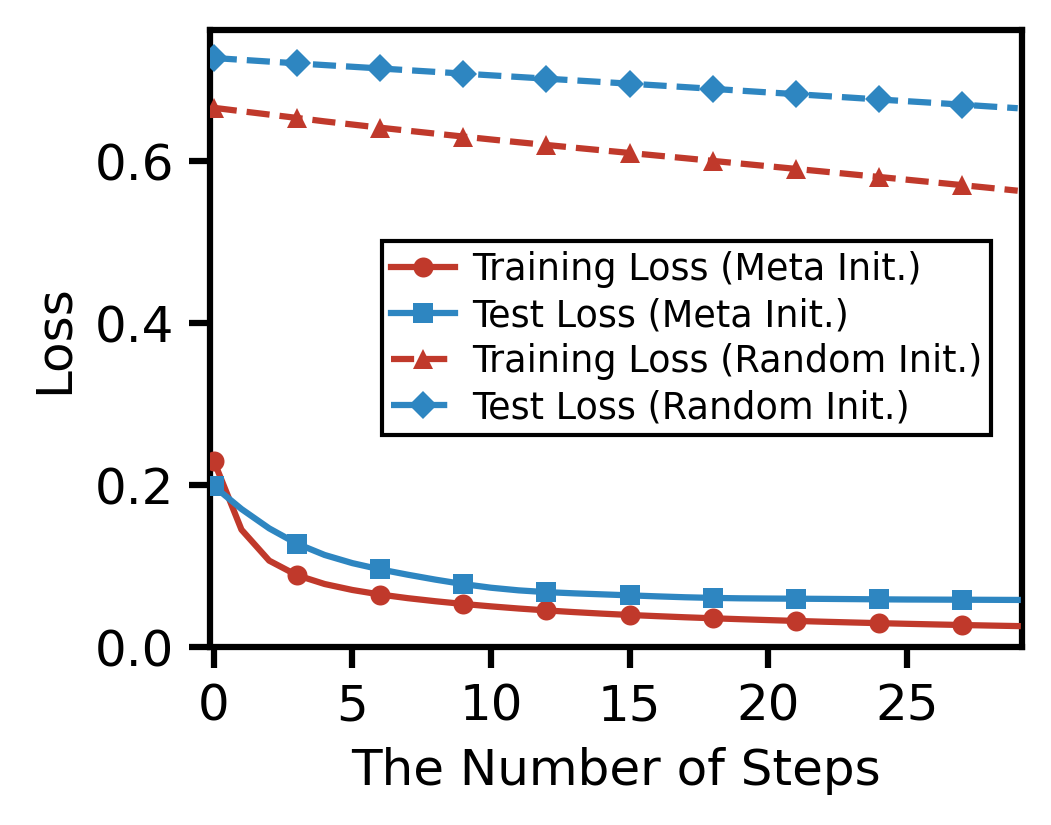}}
	\caption{(a) Loss comparison during the meta-test stage in the lab scenario. (b) Loss comparison during the meta-test stage in the office scenario. The solid red and blue curves represent the fine-tuning and test losses with meta-parameter initialization, while the dashed red and blue curves represent the corresponding losses with random initialization. Meta-parameter initialization leads to rapid loss reduction and consistently lower test loss, whereas random initialization adapts slowly and remains at a much higher loss level in both scenarios.}
	\label{fig:meta-test-loss}
\end{figure}

\subsection{Localization Performance}
We next evaluate the effect of environmental information on localization accuracy. Since the mobile-phone height is fixed during data collection, we report the horizontal Euclidean localization error $e=\sqrt{(\hat{x}-x)^2+(\hat{y}-y)^2}$. Fig.~\ref{fig:cdf}(a) and (b) compare different methods in the lab and office scenarios, respectively, when only 10 support samples are available in the new environment. EnvCoLoc achieves the best overall performance, with a mean localization error of 0.82~m in the lab scenario (\ac{los}) and 1.08~m in the office scenario (\ac{nlos}). Compared with MetaLoc, which does not use the environment-conditioned diffusion offset generator, EnvCoLoc reduces the mean error by 26.1\% in the lab scenario and 24.9\% in the office scenario. The improvement is also visible in the tail distribution, where EnvCoLoc reduces the 80th-percentile error from 1.56~m to 1.09~m in the lab scenario and from 1.79~m to 1.51~m in the office scenario. This behavior is qualitatively consistent with Corollary~\ref{cor:env_gain}: when the point-cloud descriptor explains a non-negligible portion of the task-dependent offset variance, i.e., when $R(\mathbf{Y})$ is sufficiently large, conditioning the initialization on $\mathbf{Y}$ tightens the excess-loss upper bound relative to an environment-agnostic initializer. TransLoc and ConFi yield larger errors than MetaLoc, while \ac{knn} and random initialization are less stable under sparse support samples, highlighting the advantage of environment-conditioned meta-initialization for data-efficient adaptation.

Tab.~\ref{tab:results} summarizes the mean, median, and 80th-percentile localization errors of all methods with 10 support samples. EnvCoLoc achieves the lowest error across all metrics and both scenarios, which is consistent with the CDF curves in Fig.~\ref{fig:cdf}.

\begin{table}[t]
	\centering
	\caption{Localization Error (m) with 10 Support Samples}
	\label{tab:results}
	\begin{tabular}{l|ccc|ccc}
		\toprule
		& \multicolumn{3}{c|}{lab (\ac{los})} & \multicolumn{3}{c}{office (\ac{nlos})} \\
		Method & Mean & Med. & 80th & Mean & Med. & 80th \\
			\midrule
			RI         & 2.11 & 2.18 & 3.17 & 2.64 & 2.70 & 3.32 \\
			\ac{knn}        & 2.41 & 2.36 & 3.54 & 3.64 & 3.83 & 4.23 \\
			TransLoc   & 2.31 & 2.12 & 3.74 & 2.54 & 2.65 & 3.10 \\
			ConFi      & 2.18 & 2.02 & 2.97 & 2.06 & 1.75 & 3.06 \\
			MetaLoc    & 1.10 & 1.00 & 1.56 & 1.44 & 1.42 & 1.79 \\
			EnvCoLoc   & \textbf{0.82} & \textbf{0.79} & \textbf{1.09} & \textbf{1.08} & \textbf{0.95} & \textbf{1.51} \\
		\bottomrule
	\end{tabular}
\end{table}

Fig.~\ref{fig:cdf_across_sizes} studies how the number of labeled support samples affects meta-test performance. In the lab scenario (\ac{los}), increasing the support size from $N_s=1$ to $N_s=5$ reduces the mean error from 1.68~m to 0.90~m, and $N_s=10$ further reduces it to 0.82~m. The result with $N_s=30$ remains competitive, with a mean error of 0.97~m, although it is slightly worse than $N_s=10$ due to support-set sampling variation. In the office scenario (\ac{nlos}), the mean errors for $N_s=1,5,10,30$ are 1.03~m, 1.04~m, 1.08~m, and 1.02~m, respectively, while the 80th-percentile error decreases from 1.61~m at $N_s=1$ to 1.30~m at $N_s=30$. These results indicate that EnvCoLoc already adapts effectively with a small support set, and additional samples mainly improve the upper tail of the error distribution rather than changing the mean error monotonically in every run. From the perspective of Eq.~\eqref{eq:full_bound}, the support set affects the realized finite-step adaptation around the generated initialization, whereas the environmental descriptor mainly reduces the initialization-dependent term before adaptation. This explains why the method remains effective even when $N_s$ is small, while larger support sets primarily refine difficult tail cases.

\begin{figure}
	\centering
	\subfigure[]{\includegraphics[width=0.241\textwidth]{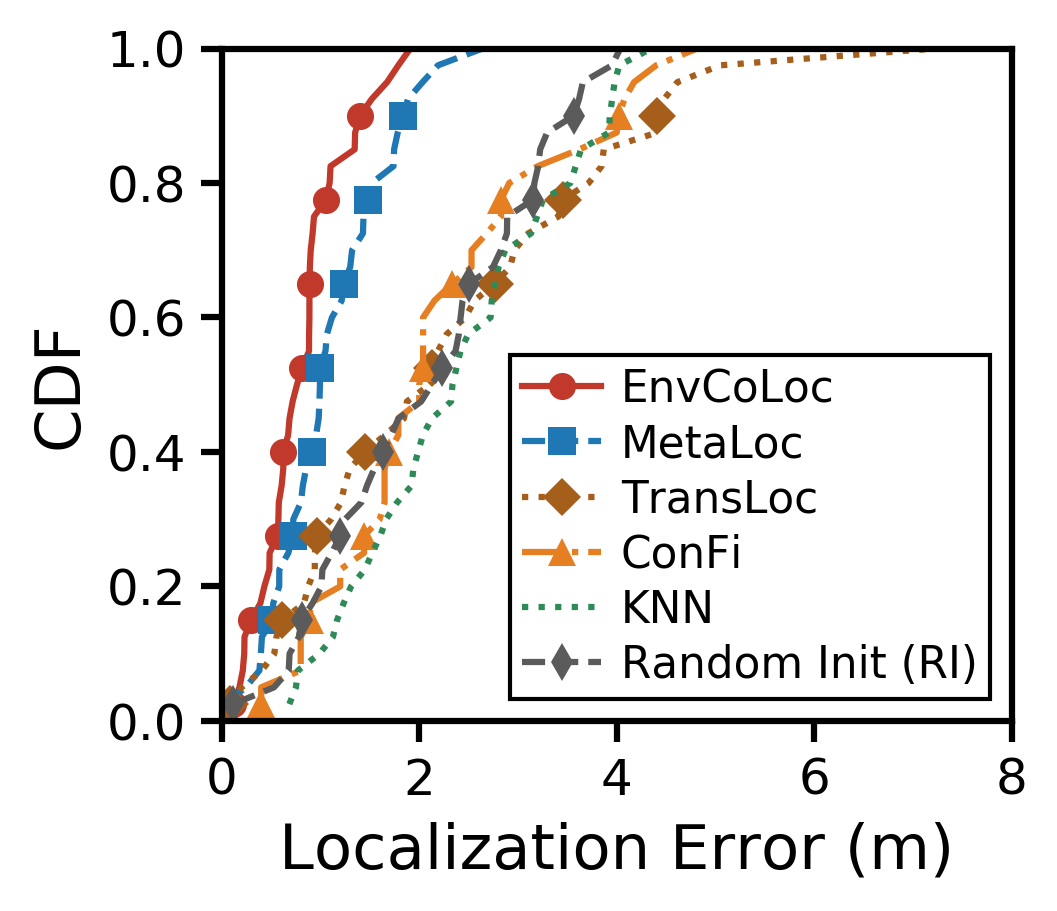}}
	\subfigure[]{\includegraphics[width=0.241\textwidth]{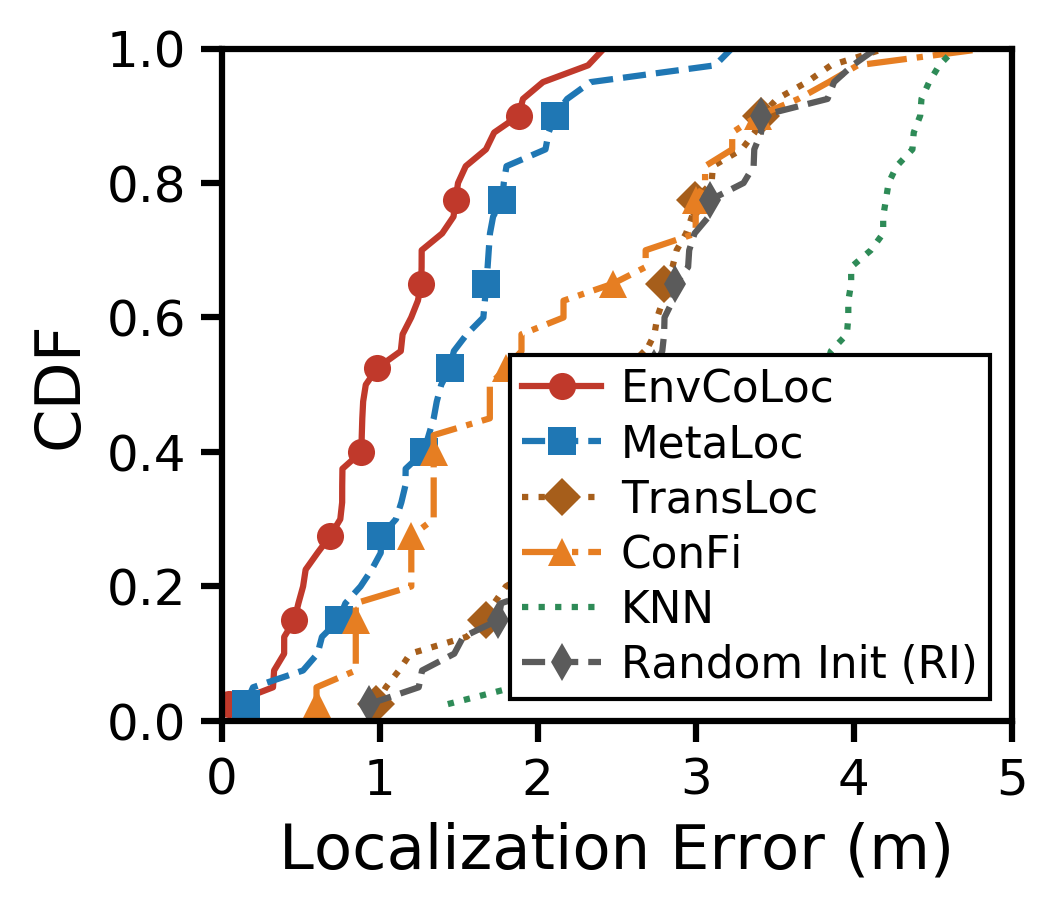}}
	\caption{Comparison of CDFs of localization errors across different methods with 10 support samples in the new environment. (a) lab, \ac{los}. (b) office, \ac{nlos}.}~\label{fig:cdf}
\end{figure}

\begin{figure}
	\centering
	\subfigure[]{
		\includegraphics[width=0.225\textwidth]{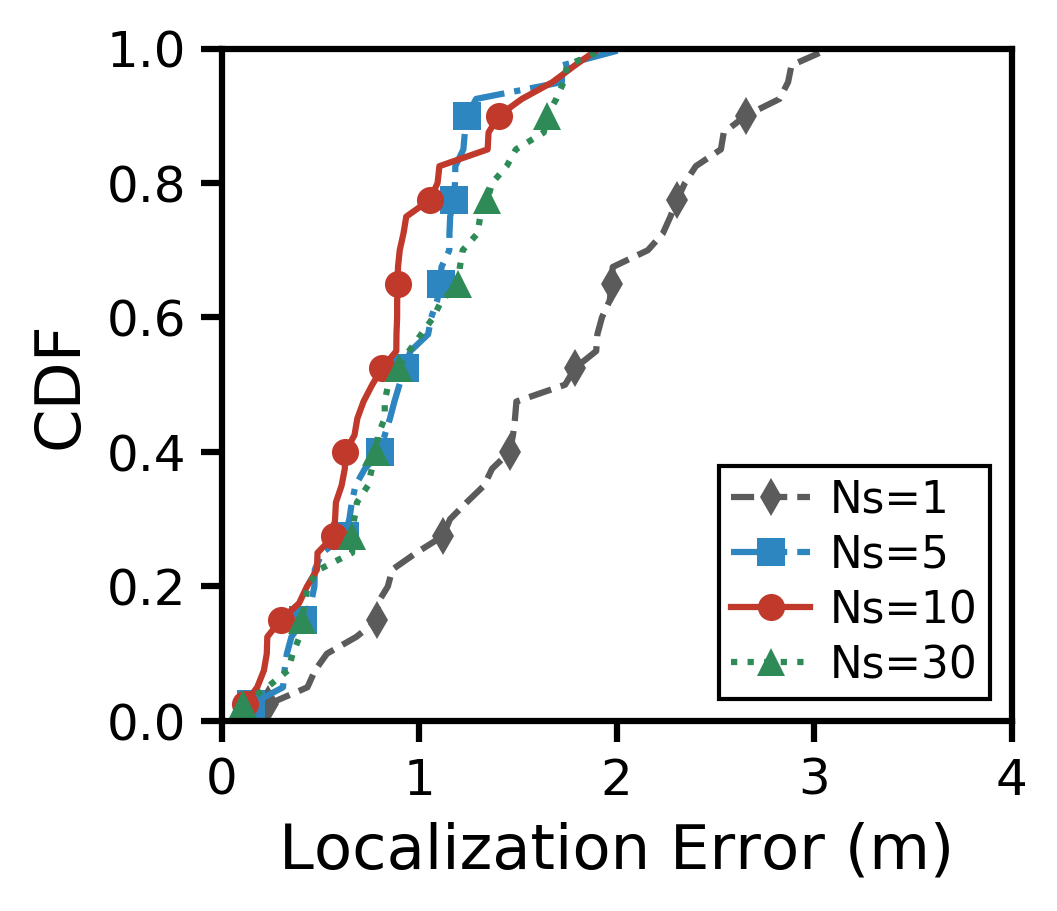}}
		\subfigure[]{
		\includegraphics[width=0.225\textwidth]{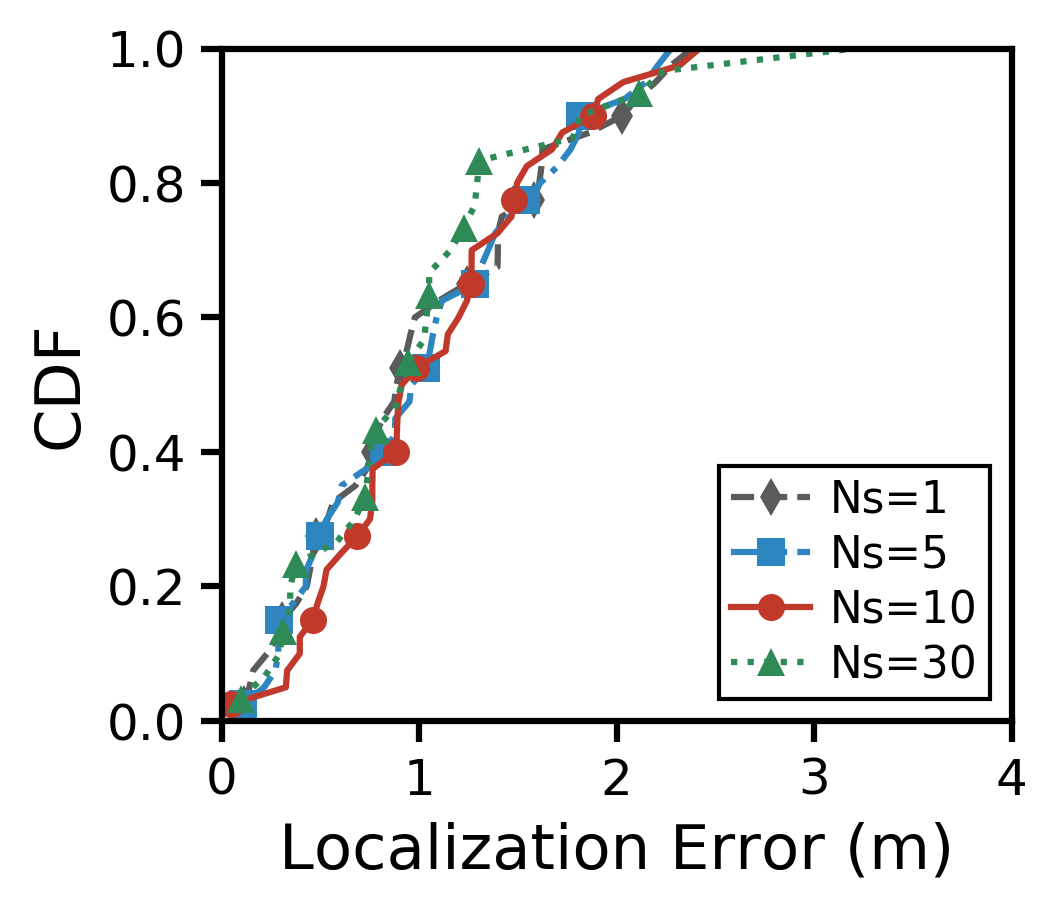}
	}
	\caption{CDF of localization error under different support set sizes at the meta-test stage for the lab and office scenarios.}
	\label{fig:cdf_across_sizes}
\end{figure}

\subsection{Ablation Study}
\label{subsec:ablation}
To isolate the contribution of each component, we evaluate three ablation variants:
\begin{itemize}
	\item \textbf{NDNE} (No Diffusion, No Environment): removes both the diffusion generator and environmental descriptors; the model reduces to a standard meta-learning baseline with the same network architecture.
	\item \textbf{NoDiff} (No Diffusion): replaces the diffusion generator with a deterministic MLP that maps $\mathbf{Y}$ to $\Delta\theta$, retaining the environmental input but removing stochastic offset generation.
	\item \textbf{NoEnv} (No Environment): uses the diffusion generator to produce offsets without environmental conditioning, so the offset becomes $\Delta\theta = g_\phi(z)$ without $\mathbf{Y}$.
\end{itemize}

Unless otherwise stated, all ablation variants use $N_s=10$ and $K_{\mathrm{in}}=5$. For consistency across scenarios, NDNE follows the same MetaLoc-style initialization protocol in both the lab and office experiments, but removes both the environmental descriptor and the diffusion offset. In the current implementation, NoEnv uses the diffusion generator without the material descriptor, while NoDiff uses a deterministic MLP offset generator with a comparable training protocol. The deterministic MLP in NoDiff is configured with a comparable number of trainable parameters to the diffusion generator and is trained with the same outer-loop meta-objective.

Fig.~\ref{fig:ablation_cdf}(a) and (b) present the CDFs of localization errors for all variants with 10 support samples, and Tab.~\ref{tab:ablation} reports the corresponding statistics. NDNE yields the largest mean error among the four variants in both scenarios, with mean errors of 1.10~m in the lab scenario and 1.44~m in the office scenario. This is consistent with the SI baseline in Corollary~\ref{cor:env_gain}: when both environment conditioning and diffusion generation are removed, the initializer loses the variance-reduction benefit associated with $R(\mathbf{Y})$. In the office scenario, NoDiff exhibits the largest 80th-percentile error, suggesting that deterministic offset generation is less robust in difficult NLOS tail cases.

The NoEnv variant removes environmental conditioning while retaining diffusion, increasing the mean error from 0.82~m to 1.03~m in the lab scenario and from 1.08~m to 1.29~m in the office scenario. This accords with the role of $R(\mathbf{Y})$ in Corollary~\ref{cor:env_gain}. The NoDiff variant retains environmental input but replaces diffusion with a deterministic MLP, leading to mean errors of 0.91~m in the lab scenario and 1.23~m in the office scenario. These results match the decomposition in Eq.~\eqref{eq:full_bound}: removing $\mathbf{Y}$ increases the deterministic conditional-variance term, while replacing diffusion with a deterministic generator weakens the modeling of stochastic residual offsets. Thus, the ablation study provides empirical evidence for the two theoretical ingredients, namely geometry-dependent variance reduction and diffusion-based residual correction.

\begin{figure}
	\centering
	\subfigure[]{\includegraphics[width=0.225\textwidth]{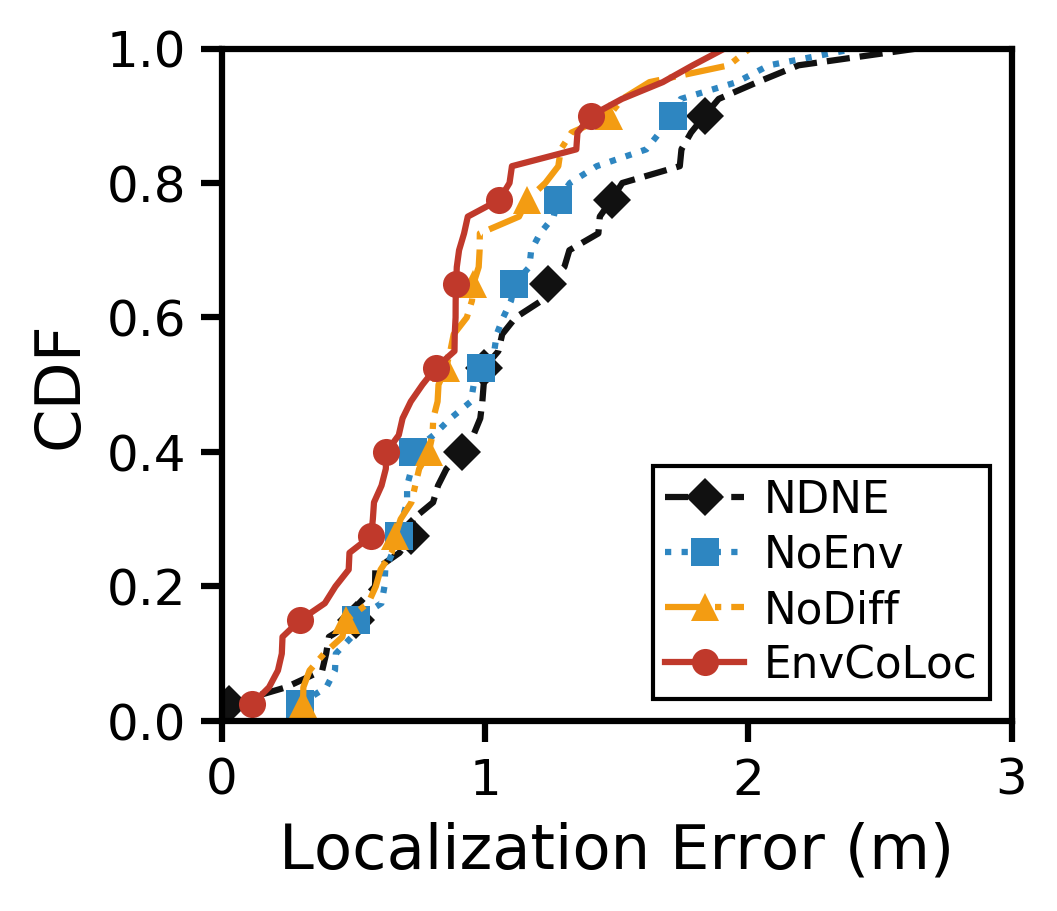}}
	\subfigure[]{\includegraphics[width=0.225\textwidth]{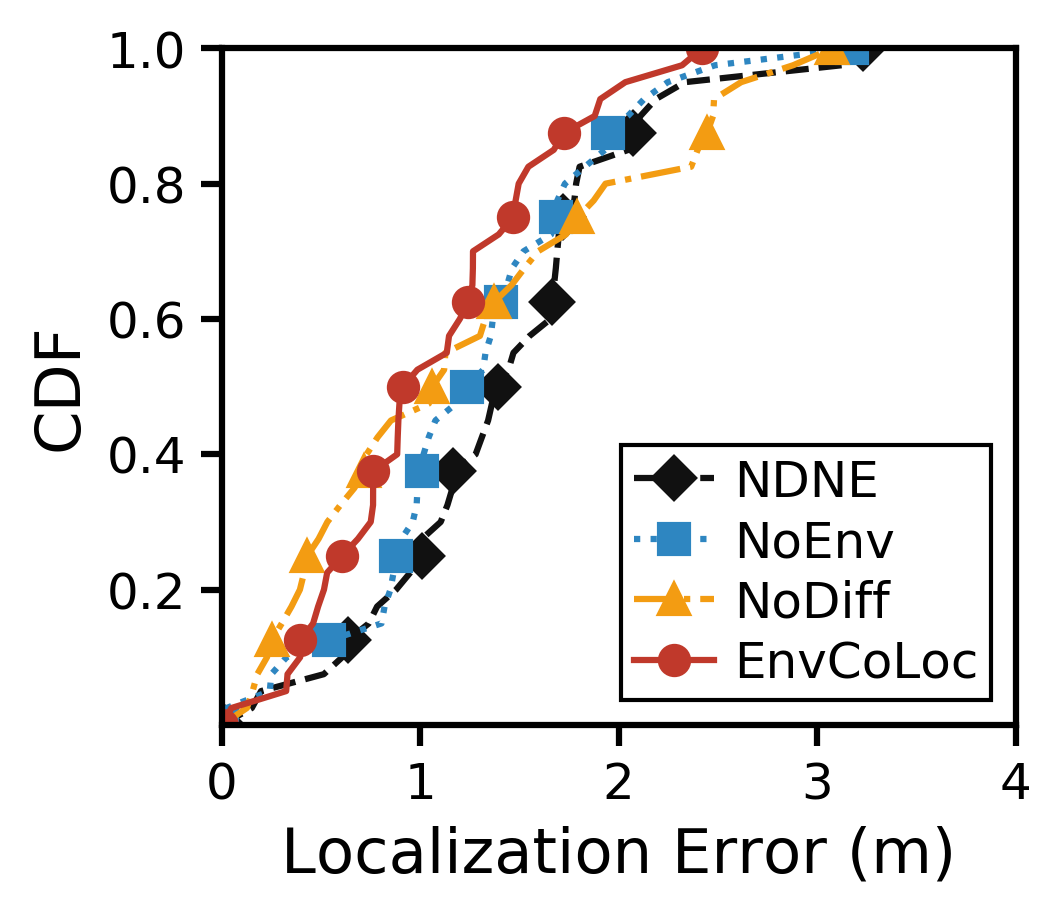}}
	\caption{CDF of localization errors for ablation variants with 10 support samples. (a) lab, \ac{los}. (b) office, \ac{nlos}.}~\label{fig:ablation_cdf}
\end{figure}

\begin{table}[t]
	\centering
	\caption{Ablation Study: Localization Error (m) with 10 Support Samples}
	\label{tab:ablation}
	\begin{tabular}{l|ccc|ccc}
		\toprule
		& \multicolumn{3}{c|}{lab (\ac{los})} & \multicolumn{3}{c}{office (\ac{nlos})} \\
		Variant & Mean & Med. & 80th & Mean & Med. & 80th \\
			\midrule
			NDNE       & 1.10 & 1.00 & 1.56 & 1.44 & 1.42 & 1.79 \\
			NoEnv      & 1.03 & 0.97 & 1.34 & 1.29 & 1.28 & 1.75 \\
			NoDiff     & 0.91 & 0.84 & 1.24 & 1.23 & 1.09 & 2.02 \\
			EnvCoLoc   & \textbf{0.82} & \textbf{0.79} & \textbf{1.09} & \textbf{1.08} & \textbf{0.95} & \textbf{1.51} \\
		\bottomrule
	\end{tabular}
\end{table}

\section{Conclusion}
\label{sec:conclusion}
In this paper, we proposed EnvCoLoc, an environment-conditioned meta-learning framework for data-efficient WiFi fingerprint localization that explicitly incorporates physical propagation geometry to mitigate cross-environment distribution shifts. By extracting material-aware descriptors from 3D point clouds within the first Fresnel zone, EnvCoLoc conditions a latent stochastic offset generator to produce environment-specific parameter corrections. This mechanism modulates a shared, meta-learned initialization, effectively decoupling the injection of geometric priors from subsequent wireless-data-driven gradient refinement. Real-world experiments using commodity WiFi devices validate this approach; with only 10 target support samples, EnvCoLoc reduced mean localization errors by 26.1\% in a LOS-dominant lab and 24.9\% in an NLOS-dominant office compared to the MetaLoc baseline. These results demonstrate that incorporating physical propagation geometry into meta-learning is an effective direction for reducing target-environment calibration effort in fingerprint-based indoor localization.

\bibliographystyle{IEEEtran}
\bibliography{EnvCoLoc}

\end{document}